\title{Competitive Location Problems: \\ Balanced Facility Location and the One-Round Manhattan Voronoi Game} 
\titlerunning{Competitive Location Problems}
\author{Thomas Byrne}{
Department of Management Science, University of Strathclyde, United Kingdom
}{tom.byrne@strath.ac.uk}{https://orcid.org/0000-0003-0548-4086}{}
\author{S\'{a}ndor P. Fekete}{Department of Computer Science, TU Braunschweig, Germany}{s.fekete@tu-bs.de}{https://orcid.org/0000-0002-9062-4241}{}
\author{Jörg Kalcsics}{
School of Mathematics, University of Edinburgh, United Kingdom
}{joerg.kalcsics@ed.ac.uk}{https://orcid.org/0000-0002-5013-3448}{}
\author{Linda Kleist}{Department of Computer Science, TU Braunschweig, Germany}{l.kleist@tu-bs.de}{https://orcid.org/0000-0002-3786-916X}{}
\authorrunning{T.\ Byrne, S.\,P.\ Fekete, J.\ Kalcsics, L.\ Kleist} 
\keywords{Facility location, competitive location, Manhattan distances, Voronoi game, geometric optimization}
\newcommand{\area}{\text{area}}
\newcommand{\bridge}{bridge\xspace}
\newcommand{\VD}{\mathcal V}
\newcommand{\B}{\mathcal{B}}
\newcommand{\balanced}{balanced\xspace}
\newcommand{\leaveout}[1]{{}}
\newtheorem{observation}[theorem]{Observation}
\crefname{observation}{Observation}{Observations}
\crefname{enumi}{}{}
\newcommand{\new}[1]{#1}
\let\orgdescriptionlabel\descriptionlabel
\renewcommand*{\descriptionlabel}[1]{%
	\let\orglabel\label
	\let\label\@gobble
	\phantomsection
	\edef\@currentlabel{#1}%
	\let\label\orglabel
	\orgdescriptionlabel{#1}%
}
\begin{document}

\maketitle

\begin{abstract}
We study competitive location problems in a continuous setting, in which facilities have to be placed in a rectangular domain $R$ of normalized dimensions of $1$ and $\rho\geq 1$, and distances are measured according to the Manhattan metric. 
We show that the family of \emph{balanced} facility configurations (in which the Voronoi cells of individual facilities are equalized with respect to a number of geometric properties) is considerably richer in this metric than for Euclidean distances. 
Our main result considers the \emph{One-Round Voronoi Game} with Manhattan distances, in which first  player White and then player Black each place $n$ points in $R$; each player scores the area for which one of its facilities is closer than the facilities of the opponent.
We give a tight characterization: White has a winning strategy if and only if $\rho\geq n$; for all other cases, we present a winning strategy for Black.
\end{abstract}

 \section{Introduction}
\label{sec:intro}

Problems of optimal location are arguably among the most important in a wide
range of areas, such as economics, engineering, and biology, as well as in
mathematics and computer science. In recent years, they have gained a
tremendous amount of importance through clustering problems in artificial
intelligence.  In all scenarios, the task is to choose a set of positions from
a given domain, such that some optimality criteria with respect to the
resulting distances to a set of demand points are satisfied; in a geometric
setting, Euclidean or Manhattan distances are natural choices. Another
challenge of facility location problems is that they often happen in a
\emph{competitive} setting, in which two or more players contend for the best
locations.
A change to competitive, multi-player versions can have a serious impact on the
algorithmic difficulty of optimization problems: for example, the classic
Travelling Salesman Problem is NP-hard, while the competitive two-player
variant is even PSPACE-complete~\cite{fff+-tspc-04}.

In this paper, \new{we study the two-player} \emph{One-Round Voronoi Game} \new{in which} first player \emph{White} and then player \emph{Black} each place $n$ points \new{at once} in \new{a rectangle $R \subset \mathbb{R}^2$ of normalized dimensions with height $1$ and width $\rho\geq 1$}. Each player scores the area consisting of the points that are closer to one of their facilities than to any one of their opponent's; see \Cref{fig:examples} for an example. 
The goal \new{of} each player is to \new{win by obtaining} the higher score. \new{If both players obtain the same score,  the game ends in a \emph{draw}.}
\new{We note that we are not interested in the margin by which a player wins in the Voronoi game. This is a crucial difference to the classical leader-follower problem introduced by von Stackelberg, where each player seeks to maximize their score \cite{Sta52,SimCru73}.} \new{We assume that both players are fully aware of these rules, i.e., they know the parameter $n$ and each other's objective before the start of the game.} 

Exploiting the geometric nature of Voronoi cells, we completely resolve \new{this} classic problem of competitive location theory for the previously open case of Manhattan distances. 
\new{Despite the fact that they are} frequently studied in location theory and applications (e.g., see~\cite{kole,kus_nis, wl-lfrda-71}), \new{Manhattan distances} have received limited attention in a setting in which facilities compete for customers. While for Euclidean distances a bisector (the set of points that are of equal distance from two facilities) is the boundary of the open Voronoi cells and thus has area zero, Manhattan bisectors may have positive area, as shown in
\Cref{fig:bisector}.  \new{This results in the fact that}
both players may score strictly less than $\rho/2$, \new{with} the remaining area belonging to \emph{neutral zones}.
\new{A key role for analysing the game and characterizing winning strategies for the players falls to identifying \emph{balanced configurations}. The latter describe a set of points placed in $R$ such that all half cells of the respective Voronoi cells restricted to $R$ have equal area. A half cell is hereby the part of the Voronoi cell to the left or right of the vertical line through the point generating the cell or symmetrically above or below the horizontal line. In a balanced configuration, all Voronoi cells have equal area, yielding a \emph{fair} apportionment of the rectangle's area among the cells, and each cell generator minimizes the average distance to all points of the cell, meaning their position is \emph{locally optimal} within their cell.}

\medskip
\noindent 
Our main results are twofold. 
\begin{itemize}
	\item We show that for location problems with Manhattan distances in
	the plane, the properties of \emph{fairness} and \emph{local optimality} lead
	to a geometric condition called \emph{balancedness}. While the analogue concept
	for Euclidean distances in a rectangle implies grid
	configurations~\cite{fekete2005}, we demonstrate that there are \balanced
	configurations of much greater variety. 
	\item We give a full characterization of the One-Round Manhattan
	Voronoi Game \new{where each player places $n$ points} in a rectangle~$R$ with aspect ratio $\rho\geq 1$. 
	We show that White has a winning strategy if and only if $\rho\geq n$; for all other cases,
	Black has a winning strategy.
\end{itemize}

\begin{figure}[t]
	\centering
	\begin{subfigure}[t]{.27\textwidth}
		\centering
		\includegraphics[page=4]{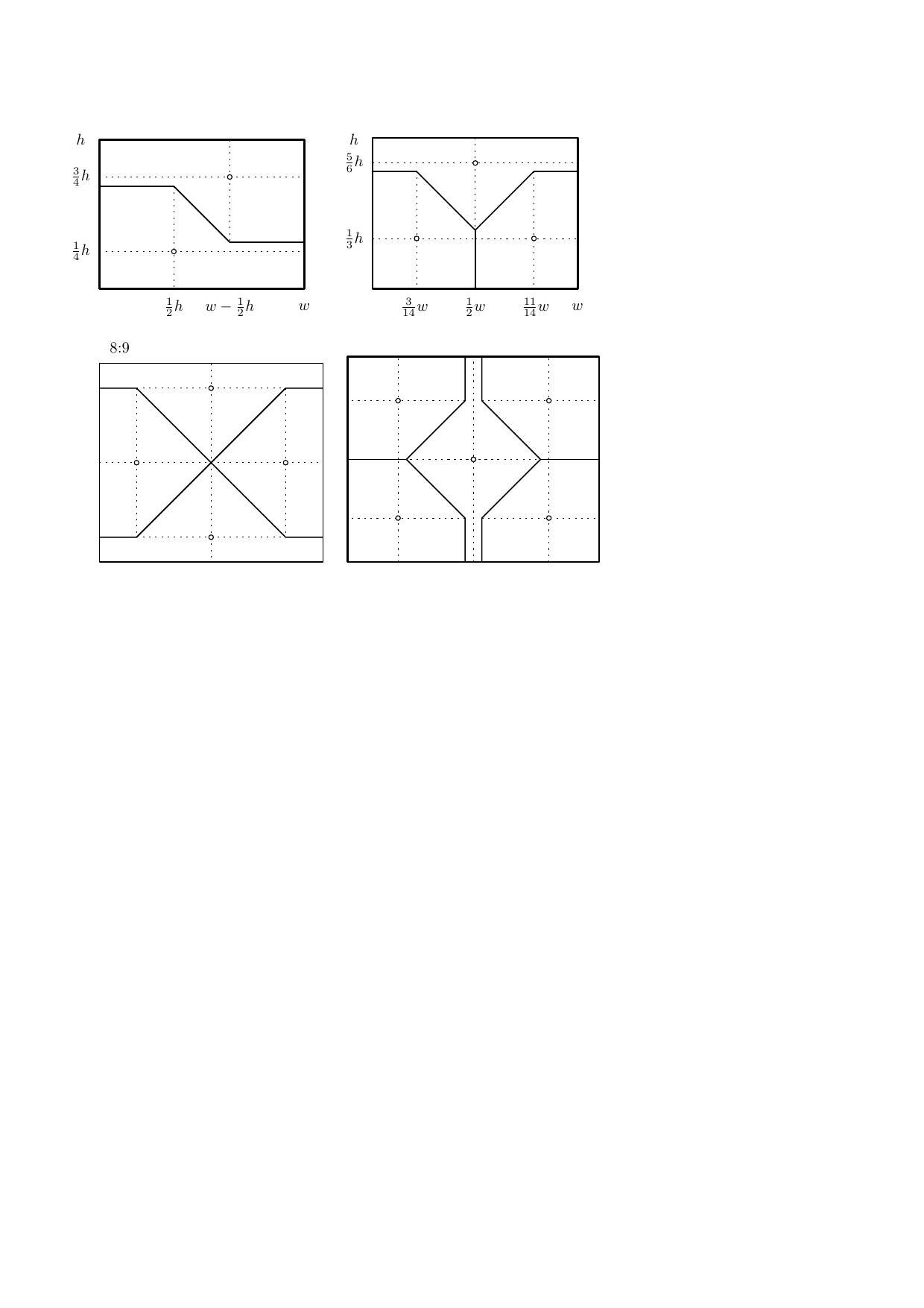}
		\caption{White places 3 points.}
	\end{subfigure}\hfil
	\begin{subfigure}[t]{.27\textwidth}
		\centering
		\includegraphics[page=5]{EqualArea}
		\caption{Black places 3 points.}
	\end{subfigure}\hfil
	\begin{subfigure}[t]{.27\textwidth}
		\centering
		\includegraphics[page=6]{EqualArea}
		\caption{The dominated areas.}
	\end{subfigure}\hfil

	\caption{Example of a one-round Manhattan Voronoi game.}
	\label{fig:examples}
\end{figure}

\section{Related Work}
\label{sec:litreview}
Our paper relates to previous work in the field of competitive facility location in general and the Voronoi game in particular, both in a geometric setting.
\subsection{Competitive Facility Location}
Scientific work on facility location can be traced back to the turn of the 20th
century, with the groundbreaking works of Launhardt~\cite{Lau00} and
Weber~\cite{Web09}. Given two raw material suppliers and a single market, they
studied the problem of determining an optimal location for a new plant. See
Wesolowsky~\cite{Wes93} for more details on the history of facility location.
All of these early works dealt with continuous location problems, i.e., the new facilities can be located anywhere \new{in} the plane. Location problems in other
domains started to evolve in the 1960s, most notably on networks
(Hakimi~\cite{Hak64}) and discrete location problems (Manne~\cite{Man64}). While
the latter now dominate the literature, planar location problems have received
considerable and ongoing attention in the literature since those early works;
see the books of Drezner \cite{Dre95}, Drezner and Hamacher \cite{DreHam02}, and  Laporte et al.~\cite{LapNicSal19}.

The first discussion of \emph{competitive} facility location problems is due to
Hotelling~\cite{Hot29} in 1929. He considered the case of a bounded linear
market with uniform demand and two players who each locate one facility \new{from where a homogeneous commodity with no production cost is sold at a fixed price of their own choosing}.
\new{Assuming customers incur transportation costs that are linear in the distance travelled, it is presumed that customers patronize the facility from which they can purchase one unit of the commodity at the lowest price. For this set-up, Hotelling claimed} that there is
an equilibrium solution for which both facilities are located right next to each
other in the centre of the line segment, with each player capturing half of the demand. \new{This was later disputed by d’Aspremont et al.~\cite{dAsGabThi79}, who proved that no equilibrium pricing solution exists when players co-locate. However, 
these issues stem from the inclusion of the pricing problem within the underlying modelling assumptions; the principle of minimum differentiation, commonly referred to as Hotelling's law, holds within the pure location model. Those interested in a further discussion are encouraged to read Aydinonat and K\"{o}ksal~\cite{AydKok19} and references therein.}
Since Hotelling's seminal work, numerous other competitive location models have appeared in the literature; 
we refer the interested reader to Dasci~\cite{Das11} \new{and} Eiselt et al.~\cite{Eiselt2019}.

Drezner~\cite{Dre82} considers a given finite set of
customer points in the plane, each with their own demand. 
First the \emph{leader}
(player 1) and then the \emph{follower} (player 2) place a fixed number of facilities.
The market share of a facility is then given by the total demand of all
customers who are closer to this facility than to any of the other facilities\new{; the leader's facility is considered to be closer in case of equal distance.}
Given the facility locations $W$ of the leader, the follower wants to place his
$r$ facilities so as to maximize his total market share. The leader, in turn,
wants to place her $p$ facilities such that her market share is maximal after
the follower places his facilities optimally. \new{Hakimi~\cite{Hak83}} called the former the
\emph{follower's} or $(r|W)$-medianoid problem, and the latter the
\emph{leader's} or $(r|p)$-centroid problem. \new{For Euclidean distances,} he derived exact polynomial-time
algorithms for the $(r|1)$-centroid and the $(1|W)$-medianoid problem. Not much
is known about the general version of these problems.
Bhadury et al.~\cite{BhaEisJar03} present a heuristic for the $(r|p)$-centroid
problem, in which the two players alternate in solving a 
medianoid problem. 
\new{For the $(r|W)$-medianoid problem}, the authors propose two heuristics: one based on incrementally solving the $(1|W)$-medianoid problem and the other based on placing the follower's facilities right next to the leader's sites. 
For the \new{Manhattan} metric, Infante-Macias and Mu\~{n}oz-P\'{e}rez \cite{InfMun95}
derive an exact enumeration algorithm to solve the $(r|W)$-medianoid problem --
albeit in time exponential in $r$.

\new{Less is known for geometric location problems for which demand is not
discrete, but uniformly distributed across the plane. In this setting,}
Averbakh et al.~\cite{AveBerKalKra15} derive an exact polynomial-time algorithm
for the  $(1|W)$-medianoid problem \new{with Manhattan metric} (as well as for several non-competitive problems), finding an optimal location for
an additional facility in a convex region 
with $n$ existing facilities.
For Euclidean \new{metric}, a convex compact market area, and two players, each
placing one facility, Aoyagi and Okabe~\cite{AoyOka93} prove that an
equilibrium configuration exists if and only if the market area is point-wise
symmetric with respect to some point in the area; this point is then the
optimal location for both facilities. This is the two-dimensional analogon to
Hotelling's observation for a linear market. For more than two facilities,
however, the equilibrium configurations are markedly different. For an
unbounded plane and an infinite number of facilities (or competing players),
Okabe and Aoyagi~\cite{OkaAoy91} show that the equilibrium configuration forms
a regular hexagonal pattern. For a square market and a finite number of firms,
Okabe and Suzuki~\cite{OkaSuz87} show that the equilibrium state exhibits a honeycomb pattern. 

\new{The problem of finding a \emph{fair} apportionment of the rectangle's area, i.e., all Voronoi cells have equal area, has been addressed in Baron~et~al.~\cite{BarBerKraWan07} for  Euclidean distances. They consider a unit square with uniformly distributed demand and assume that each demand point is served by the closest facility, provided this facility is within a predefined radius of the demand point. The goal is then to locate $n$ facilities in the square such that all facilities face an equal demand load, the latter being computed as the area served exclusively by the facility. The authors propose an iterative procedure utilizing Voronoi diagrams to approximately solve the problem. Suzuki and Drezner~\cite{DreSuz09} present an improved gradient search approximation algorithm for this problem as well as for two new problems. We point out that configurations obtained by those problems are \emph{fair}, but not necessarily \emph{locally optimal} (and therefore not necessarily \emph{balanced}).}

For urban location problems, the Manhattan metric provides a much better \new{geometric} approximation of the actual travel distances than the Euclidean metric, even if urban road networks are not all grid-shaped. Moreover, many applications arise from multi-dimensional data sets with heterogeneous dimensions, where the Manhattan metric is a compelling choice. While rectilinear problems have been frequently studied in location
theory and applications (e.g., see~\cite{kole,kus_nis, wl-lfrda-71}), they have
received limited attention in a setting in which facilities compete for
customers. For non-competitive
problems with Manhattan distances, Fekete et al.~\cite{fmb-cfwp-05} provide several
algorithmic results, including an NP-hardness proof for the $k$-median problem
of minimizing the average distance. Along similar lines, Bender et al.~\cite{bbd-wosc-04}
describe the shape of a region with a desired area that minimizes
the normalized Manhattan distance; this characterization is based on a 
differential equation for which no closed-form solution is known, highlighting
the surprising depth of location problems with Manhattan distances.

\subsection{The Voronoi Game}
An important scenario for competitive facility location is the \emph{Voronoi game}, first introduced by Ahn et al.~\cite{Ahn2004}, in which two players, \emph{White} and \emph{Black}, take turns placing one facility at a time \new{in a given playing arena}. In the end, each player scores the area consisting of the points that are strictly closer to one of their facilities than to any one of the opponent's, i.e., the total area of their open Voronoi cells. 
The goal for each player is to obtain the higher score. 
As Teramoto et al.~\cite{teramoto2006voronoi} showed, the
problem is PSPACE-complete, even in a discrete graph setting. 
There is a considerable amount of other work on variants of the Voronoi game.
Bandyapadhyay et al.~\cite{bandyapadhyay2015voronoi} consider the one-round game in trees, providing a polynomial-time algorithm for the second player. As
Fekete and Meijer~\cite{fekete2005} have shown, the problem is NP-hard for
polygons with holes, corresponding to a planar graph with cycles. For a
spectrum of other variants and results,
see~\cite{banik2013one,durr2007nash,gerbner2013advantage,kiyomi2011voronoi}.

Special attention has been paid to the \emph{One-Round Voronoi Game}, in which
each player places their $n$ facilities at once, starting with White; see \Cref{fig:examples} for an example with Manhattan distances \new{and a rectangular arena}. 
\new{During the game, co-location of points is forbidden (or does not occur in optimal play when
breaking ties of equal distances in favour of white).
}

This Voronoi game  -- which can also be considered a special case of the $(n|n)$-centroid problem -- was first studied in Cheong et al.~\cite{cheong2004one}.
They showed that for Euclidean distances,
the first player, White, can always win \new{when the playing arena consists of a segment}, while the second player, Black, has a winning strategy if the \new{arena} is a square and $n$ is sufficiently large.
Fekete and Meijer~\cite{fekete2005} refined this by showing that in a rectangle
of dimensions $1\times \rho$ with $\rho\geq 1$, Black has a winning strategy
for $n\geq 3$ and $\rho<n/\sqrt{2}$, and for $n=2$ and $\rho<2/\sqrt{3}$; White
wins in all other cases.

Consideration of Manhattan distances leads to a number of important differences.
While for Euclidean distances a bisector (the set of points that are of equal
distance from two facilities) is the boundary of the open Voronoi cells, so its
area is zero, Manhattan bisectors may have positive area, as shown in
\Cref{fig:bisector}. Thus, owing to the different nature of the Manhattan
metric, both players may dominate strictly less than $\rho/2$, the remaining
area belonging to \emph{neutral zones}.
\section{Preliminaries}
\label{sec:voronoi}

Let $P$ denote a finite set of points in a rectangle $R$.
For two points $p_1=(x_1,y_1)$ and $p_2=(x_2,y_2)$, we define $\Delta_x(p_1,p_2):=|x_1-x_2|$ and $\Delta_y(p_1,p_2):=|y_1-y_2|$. Then their Manhattan distance  is given by 
$d_M(p_1,p_2):=\Delta_x(p_1,p_2)+\Delta_y(p_1,p_2)$.

Defining $D(p_1,p_2):=\{p\in R\mid d_M(p,p_1)< d_M(p,p_2)\}$ as a set of points that are closer to $p_1$ than to $p_2$, the \emph{Voronoi cell} of $p$ in $P$ is 
\begin{equation*}
V^P(p):=\bigcap_{q\in P\setminus \{p\}} D(p,q).
\end{equation*}
The \emph{Manhattan Voronoi diagram} $\VD(P)$ is the complement of the union of all Voronoi cells of~$P$. 
In contrast to the Euclidean case, for which the Voronoi diagram has measure zero and every Voronoi cell is convex:
\begin{itemize}
	\item the Manhattan Voronoi diagram may  contain \emph{neutral zones} of positive measure, and
	\item Manhattan Voronoi cells need not be convex, but they are star-shaped.
\end{itemize}

Both of these properties can easily be observed when analysing the bisectors. 
The \emph{bisector} of $p_1$ and $p_2$ is the set of all points that are of equal distance from $p_1$ and $p_2$, i.e.,
\begin{equation*}
\B(p_1,p_2):=\{q\in R \mid d_M(q,p_1)= d_M(q,p_2)\}.
\end{equation*}
There are three types of bisectors, as shown in \cref{fig:bisector}.
Typically, a bisector consists of  three one-dimensional parts, namely two (vertical or horizontal) segments that are connected by a segment of slope $\pm 1$; see \cref{fig:bisector1}.
 If $\Delta_x(p_1,p_2)=0$ or $\Delta_y(p_1,p_2)=0$, then the diagonal segment  shrinks to a point and the bisector consists of a (vertical or horizontal) segment; see \cref{fig:bisector2}.
However, when $ \Delta_x(p_1,p_2)= \Delta_y(p_1,p_2)$, then the bisector $\B(p_1,p_2)$ contains two regions; see \cref{fig:bisector3}. We call a bisector of this type \emph{degenerate}.
Further, a non-degenerate bisector is \emph{vertical} (\emph{horizontal}) if it contains vertical (horizontal) segments.
\begin{figure}[htb]
	\centering
	\begin{subfigure}[t]{.3\textwidth}
		\centering
		\includegraphics[page=3]{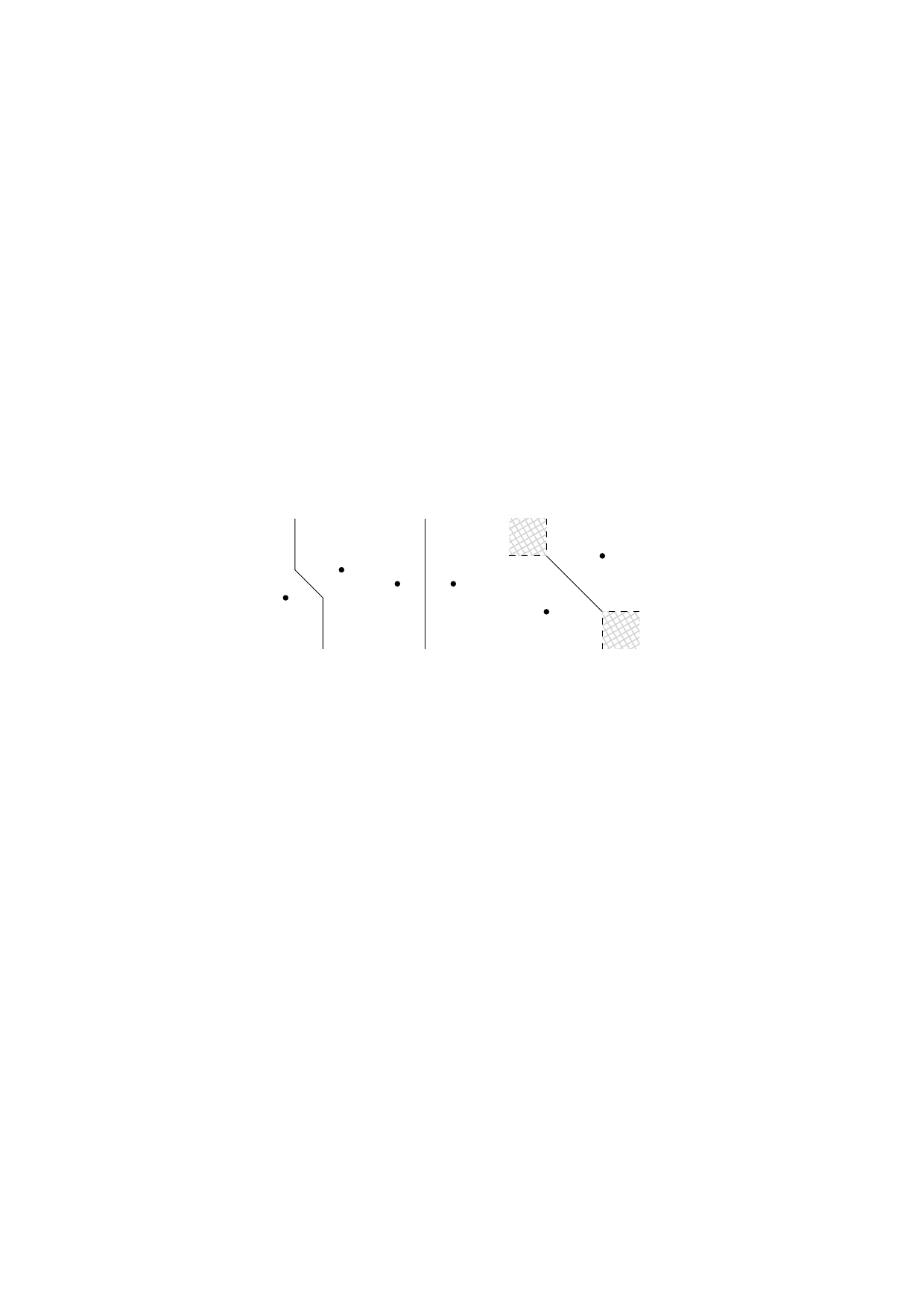}
		\caption{General vertical bisector.}
		\label{fig:bisector1}
	\end{subfigure}\hfill
	\begin{subfigure}[t]{.3\textwidth}
		\centering
		\includegraphics[page=4]{Bisectors}
		\caption{Case: $\Delta_y(p_1,p_2)=0$.}
		\label{fig:bisector2}
	\end{subfigure}\hfill
	\begin{subfigure}[t]{.3\textwidth}
		\centering
		\includegraphics[page=5]{Bisectors}
		\caption{Degenerate bisector.}
		\label{fig:bisector3}
	\end{subfigure}
	\caption{Illustration of the three types of bisector.}
	\label{fig:bisector}
\end{figure}

For $p=(x_p,y_p)\in P$, both the vertical line $\ell_v(p)$
and the horizontal line $\ell_h(p)$
 through~$p$ split the Voronoi cell~$V^P(p)$ into two pieces, 
which we call \emph{half cells}.
We denote the set of all half cells of $P$ obtained by vertical lines
by $\mathcal H^\vert$ and those obtained by horizontal lines by $ \mathcal
H^-$. Furthermore, we define $\mathcal H:=\mathcal H^\vert\cup
\mathcal H^-$ as the set of all half cells of $P$. 
Applying both $\ell_v(p)$ and $\ell_h(p)$ to~$p$ yields a subdivision into four quadrants, which we denote by $Q_i(p)$, $i\in\{1,\ldots,4\}$; see \Cref{fig:quadrant}.  Moreover, $C_i(p):=V^P(p)\cap Q_i(p)$ is called the $i$th \emph{quarter cell} of~$p$.
We also consider the eight regions of every
$p\in P$ obtained by cutting $R$ along the lines $\ell_v(p)$,  $\ell_h(p)$, and the two
diagonal lines of slope $\pm 1$ through $p$. We refer to each such (open) region as an
\emph{octant} of $p$ denoted by $O_i(p)$ for $i\in\{1,\ldots,8\}$ 
(see \Cref{fig:octant}); a closed octant is denoted by $\overline
O_i(p)$. The area of a subset $S$ of $R$ is denoted by $\area(S)$.

\begin{figure}[htb]
	\centering
		\begin{subfigure}[t]{.4\textwidth}
		\centering
		\includegraphics[page=11]{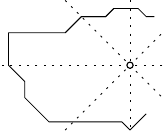}
		\caption{The quadrants and quarter cells.}
		\label{fig:quadrant}
	\end{subfigure}\hfil
	\begin{subfigure}[t]{.2\textwidth}
		\centering
		\includegraphics[page=2]{Bisectors}
		\caption{The octants.}
		\label{fig:octant}
	\end{subfigure}\hfil
	\begin{subfigure}[t]{.3\textwidth}
		\centering
		\includegraphics[page=34]{EqualArea}
		\caption{A $2\times 3$ grid.}
		\label{fig:AreaBal}
	\end{subfigure}
	\caption{Illustration of crucial definitions.}
	\label{fig:misc}
\end{figure}

For a point $p\in P$, we call the four horizontal and vertical rays rooted at $p$, contained within $V^P(p)$, the four \emph{arms} of $V^P(p)$ (or of $p$).
Two arms are \emph{neighbouring} if they appear consecutively in the cyclic order; otherwise they are \emph{opposite}. Moreover, we say an arm is a \emph{boundary arm}  if its end point touches the boundary of $R$; otherwise it is \emph{inner}.
For later reference, we note the following. 
\begin{observation}\label{obs:prop}
	The following properties hold:
	\begin{enumerate}[(i)]
		\item \label{itemA} If the bisector $\B(p,q)$ is non-degenerate and vertical (horizontal), then it does not intersect both the left and right (top and bottom) half cells of~$p$.
		\item \label{itemB} For every $i$ and every $q_1,q_2\in O_i(p)$, the bisectors $\B(p,q_1)$ and $\B(p,q_2)$ have the same type (vertical/horizontal). 
		\item \label{itemC} A Voronoi cell is contained in the axis-aligned rectangle spanned by its arms.
	\end{enumerate}
\end{observation}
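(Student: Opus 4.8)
The three parts are largely independent; I would settle \ref{itemA} and \ref{itemB} by short distance computations and reserve the actual work for \ref{itemC}. For \ref{itemA}, the plan is to show that a non-degenerate vertical bisector lies strictly on the side of $\ell_v(p)$ containing~$q$. Assume w.l.o.g.\ $x_q>x_p$, so that verticality gives $\Delta_x(p,q)>\Delta_y(p,q)$. For any $r=(x_r,y_r)$ with $x_r\le x_p<x_q$ we have $|x_r-x_q|=\Delta_x(p,q)+|x_r-x_p|$, and therefore
\[
 d_M(r,p)-d_M(r,q)=\bigl(|y_r-y_p|-|y_r-y_q|\bigr)-\Delta_x(p,q)\le \Delta_y(p,q)-\Delta_x(p,q)<0,
\]
using the reverse triangle inequality on the $y$-coordinates. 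Hence every point with $x_r\le x_p$ is strictly closer to $p$ than to $q$ and cannot lie on $\B(p,q)$, so the bisector is confined to $\{x>x_p\}$ and meets only the right half cell of~$p$. The horizontal case is symmetric.

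For \ref{itemB}, I would recall from the classification above that a non-degenerate bisector $\B(p,q)$ is vertical exactly when $\Delta_x(p,q)>\Delta_y(p,q)$ and horizontal exactly when $\Delta_x(p,q)<\Delta_y(p,q)$. The two diagonals of slope $\pm 1$ through $p$ form precisely the locus $|x-x_p|=|y-y_p|$, i.e.\ $\Delta_x=\Delta_y$, and so separate the two sign regions of $\Delta_x-\Delta_y$. Since each octant $O_i(p)$ is an open region bounded by $\ell_v(p)$, $\ell_h(p)$, and these diagonals, the sign of $\Delta_x(p,\cdot)-\Delta_y(p,\cdot)$ is constant and nonzero throughout $O_i(p)$; hence all sites in one octant induce bisectors of a single fixed type.

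The crux is \ref{itemC}. Writing $r$ and $u$ for the lengths of the right and top arms of $p$, take $s=(x_s,y_s)\in V^P(p)$ with, by symmetry, $x_s\ge x_p$ and $y_s\ge y_p$; it suffices to prove $x_s\le x_p+r$ and $y_s\le y_p+u$. Assume $x_s>x_p+r$ and put $w=(x_s,y_p)$. Because $V^P(p)$ is star-shaped with respect to $p$, the positive horizontal ray from $p$ meets $V^P(p)$ in a single initial segment of length $r$, so $w$---at distance $x_s-x_p>r$ along that ray---lies outside $V^P(p)$; choose a site $q$ with $d_M(w,q)\le d_M(w,p)=x_s-x_p$. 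The triangle inequality then yields
\[
 d_M(s,q)\le d_M(s,w)+d_M(w,q)\le (y_s-y_p)+(x_s-x_p)=d_M(s,p),
\]
contradicting $s\in V^P(p)$. Bounding $y_s$ via the vertical ray and the point $(x_p,y_s)$ is symmetric, and the remaining quadrants follow by reflection.

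The main obstacle is this last step: the site capping an arm need not be the site that excludes $s$, so one cannot simply prolong a single bisector. The device I rely on is to project $s$ orthogonally onto $\ell_h(p)$ at $w$, let star-shapedness force $w\notin V^P(p)$, and use the triangle inequality to carry whichever site dominates $w$ back to $s$. The only cases left to check are degenerate---an arm ending on $\partial R$ (where $s\in R$ gives the bound immediately) and $s$ lying on an axis through $p$ (where $w=s$ and the contradiction is direct).
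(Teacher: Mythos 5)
Your proofs of all three parts are correct, but your route to \ref{itemC} differs from the paper's. The paper treats \ref{itemA} and \ref{itemB} exactly as you do in spirit -- it declares them ``immediate from the shape of the bisectors,'' and your sign computation for $\Delta_x-\Delta_y$ and the strict inequality $d_M(r,p)<d_M(r,q)$ for $x_r\le x_p$ are just the details it omits (your version of \ref{itemA} in fact proves the slightly stronger statement that the bisector avoids the closed half-plane on $p$'s side). The real divergence is in \ref{itemC}: the paper deduces it from \ref{itemB}, arguing that since all sites in one octant of $p$ induce bisectors of one type, the boundary of $V^P(p)$ decomposes into four $x$- and $y$-monotone paths joining the arm tips, and monotonicity traps the cell in the bounding rectangle of the arms. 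You instead argue directly in the metric: project a putative outlier $s$ onto the axis at $w$, use star-shapedness to conclude $w\notin V^P(p)$, and carry the site dominating $w$ back to $s$ by the triangle inequality, $d_M(s,q)\le d_M(s,w)+d_M(w,q)\le d_M(s,p)$. Your argument is more self-contained and rigorous -- it does not require justifying why \ref{itemB} yields monotone boundary arcs, which is the one genuinely unproved step in the paper's version -- and it correctly isolates the point that the site capping an arm need not be the site excluding $s$; the paper's route, on the other hand, yields the stronger structural description of the cell boundary (four monotone arcs), which is reused implicitly elsewhere. Both are valid; just note that your use of star-shapedness leans on an assertion the paper states without proof in the preliminaries, though it can be replaced by the observation that $t\mapsto d_M(p+te_1,q)-t$ is non-increasing along each ray.
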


\begin{proof}
	Properties~(\cref{itemA}) and (\cref{itemB}) follow immediately from the shape of the bisectors. 
 Property~(\ref{itemB}) implies that a Manhattan Voronoi cell consists of four ($x$- and $y$-) monotone paths connecting the tips of its arms. Consequently, each Voronoi cell is contained in the axis-aligned rectangle spanned by its arms and, thus, property~(\ref{itemC}) holds.
\end{proof}

 \section{Balanced Point Sets}\label{sec:areaBalanced}

In this section, we discuss properties of balanced point sets in the Manhattan metric (\cref{sec:balanced_char}) and present families of non-grid balanced point sets showing that they are considerably richer in the Manhattan metric than in the  Euclidean case (\cref{sec:balanced_nongrid,sec:balanced_family}).

\subsection{Properties and Characterization}\label{sec:balanced_char}
In a competitive setting for facility location, it is a natural \emph{fairness property}
to allocate the same amount of influence to each facility.
A second \emph{local optimality property} arises from choosing an efficient location
for a facility within its individual Voronoi cell, \new{i.e, a location that minimizes the average distance to all points}.
Combining both properties, we say 
a point set~$P$ in a rectangle~$R$ is \emph{\balanced} if  the
following two conditions are satisfied:
\begin{description}
\item[Fairness:] 
for all $p_1,p_2\in P$,  $V^P(p_1)$ and $V^P(p_2)$  have the same area.
\item [Local optimality:] for all $p\in
P$, $p$ minimizes the average distance to the points in $V^P(p)$.
\end{description}

For Manhattan distances, there is a  simple geometric characterization for the
local optimality depending on the area of the half and quarter cells; see \Cref{fig:quadrant}.

\begin{restatable}{lemma}{median}
\label{obs:quadrant}
A point $p$ minimizes the average Manhattan distance to the points in $V^P(p)$ if and only if either \new{one} of the following properties holds:
\begin{enumerate}[(i)]
	\item $p$ is a \emph{Manhattan median} of $V^P(p)$: all four half cells of $V^P(p)$ have the same area. \label{item:halfcells}
	\item $p$ satisfies the \emph{quarter-cell property}: diagonally opposite quarter cells of $V^P(p)$ have the same area.
	 \label{item:quadrant}
\end{enumerate}
\end{restatable}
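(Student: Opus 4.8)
The plan is to regard $S:=V^P(p)$ as a fixed measurable region and to study the total-distance functional $F(z):=\int_S d_M(z,q)\,dq$ over candidate points $z=(x,y)$. Since $\area(S)$ is a positive constant, minimizing the average Manhattan distance is the same as minimizing $F$. The key observation is that the Manhattan distance separates coordinatewise, so that
\[
F(x,y)=\int_S|x-x_q|\,dq+\int_S|y-y_q|\,dq=:G(x)+H(y).
\]
Because $G$ and $H$ depend on disjoint variables, $p$ minimizes $F$ if and only if $x_p$ minimizes $G$ and $y_p$ minimizes $H$, which reduces the statement to a pair of one-dimensional median problems for the marginal (projected) area measures of $S$. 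Note that the minimizer automatically lies within the coordinate range of $S\subseteq R$, so there is no issue with the choice of ambient domain.

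First I would record that $G$ and $H$ are convex, being integrals of convex functions of a single variable, and that a vertical slice $\{q\in S: x_q=x\}$ has two-dimensional measure zero, so that $G$ is in fact continuously differentiable with
\[
G'(x)=\area\bigl(\{q\in S: x_q<x\}\bigr)-\area\bigl(\{q\in S: x_q>x\}\bigr),
\]
and symmetrically for $H$. By convexity, a point minimizes $G$ exactly when $G'$ vanishes there, i.e.\ when the vertical line through it splits $S$ into two parts of equal area; the analogous statement holds for $H$ and the horizontal line. Translating this into the vocabulary of the lemma, $p$ minimizes $F$ precisely when the left and right half cells of $V^P(p)$ have equal area and the top and bottom half cells have equal area. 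Since each of these pairs sums to $\area(S)$, both common values equal $\area(S)/2$, which is exactly Property~(\ref{item:halfcells}).

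Finally I would derive the equivalence of Properties~(\ref{item:halfcells}) and~(\ref{item:quadrant}) by elementary bookkeeping. Writing $a_i:=\area(C_i(p))$ for the four quarter cells in cyclic order, each half cell is the union of two consecutive quarter cells, so the two half-cell identities of Property~(\ref{item:halfcells}) read $a_1+a_4=a_2+a_3$ and $a_1+a_2=a_3+a_4$. Subtracting these equations yields $a_2=a_4$, and substituting back gives $a_1=a_3$; conversely, the pair $a_1=a_3$ and $a_2=a_4$ restores both half-cell identities. Hence all four half cells are equal if and only if diagonally opposite quarter cells are equal, establishing Property~(\ref{item:quadrant}).

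The step that requires the most care is the middle one: one must confirm that the area-bisection condition characterizes a \emph{global} minimizer rather than merely a stationary point, which is precisely what convexity of $G$ and $H$ provides, together with the measure-zero fact that makes the marginal derivatives well defined and continuous. Once the coordinatewise separation $F=G+H$ is in place, the remaining work is one-dimensional calculus and linear bookkeeping, so the genuinely load-bearing idea is this decomposition.
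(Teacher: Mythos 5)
Your proposal is correct and follows essentially the same route as the paper: both reduce the problem to independent one-dimensional median conditions via the coordinatewise separability of the Manhattan distance (the paper phrases this as a perturbation argument that changes the average $y$-distance while leaving the $x$-distance fixed), and both derive the equivalence of the half-cell and quarter-cell conditions by the same adding/subtracting of the two linear identities. Your explicit appeal to convexity of $G$ and $H$ to certify that the area-bisection condition gives a global minimizer is a slightly more careful treatment of the sufficiency direction, which the paper handles by noting that the half-cell condition determines $x_p$ and $y_p$ uniquely.
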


\begin{proof}
	For an illustration we refer to \Cref{fig:quadrant}.
	Let $a_i$ denote the area of the quarter cell $C_i(p)$. First we
	consider a point $p=(x_p,y_p)$ that minimizes the average Manhattan distance to
	all points in $V^P(p)$. Suppose the area of the top half cell exceeds the area
	of the bottom half cell, i.e., $a_1+a_2>a_3+a_4$. Then, replacing $p$ by
	$p'=(x_p,y_p+\varepsilon)$ for an appropriately small $\varepsilon>0$ reduces
	the average $y$-distance and leaves the average $x$-distance unchanged. This contradicts the optimality of $p$. Similarly,
	we can exclude $a_1+a_2<a_3+a_4$, so $a_1+a_2=a_3+a_4$, making $p$ a $y$-median of $V^P(p)$.
	Analogously, we conclude that $a_1+a_4=a_2+a_3$, making $p$ an $x$-median of $V^P(p)$. This shows that all half cells of $V^P(p)$ have the same area, \new{i.e., property (\ref{item:halfcells}) holds}.
	Moreover, note that the half-cell condition uniquely defines both $x_p$ and $y_p$, so \new{(\ref{item:halfcells})} is both necessary and sufficient.
	
	We now show that (\ref{item:halfcells}) is equivalent to (\ref{item:quadrant}).
	By adding the equations 
	\begin{align*}
	a_1+a_2&=a_3+a_4\\
	a_1+a_4&=a_2+a_3
	\end{align*}
	it follows that 
	$2a_1+a_2+a_4=2a_3+a_2+a_4\iff a_1=a_3$.
	By subtracting the equations, we get
	$a_2-a_4=a_4-a_2\iff a_2=a_4.$ Hence, the quarter-cell property is fulfilled. 
	
	Conversely,  $a_1=a_3$ and $a_2=a_4$ imply $a_1+a_2=a_3+a_4=a_1+a_4=a_2+a_3$.
\end{proof}

\Cref{obs:quadrant} immediately implies the following characterization.

\begin{corollary}
\label{th:balanced}
A point set $P$ in a rectangle $R$ is \balanced if and only if all half cells of $P$ have the same area. 
\end{corollary}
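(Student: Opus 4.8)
The plan is to prove both implications directly, using the two defining conditions of balancedness together with the half-cell characterization of local optimality from \Cref{obs:quadrant}. The key bridge is the elementary observation that the vertical line $\ell_v(p)$ splits $V^P(p)$ into its left and right half cells, so $\area(V^P(p))$ equals the sum of those two half cells (and likewise for the horizontal split). Consequently, whenever all four half cells of a single cell share a common area $h_p$, the cell itself has area $2h_p$. This relation is what lets me convert between statements about half cells and statements about whole cells.

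For the forward direction I would assume $P$ is \balanced. Local optimality together with condition~(\ref{item:halfcells}) of \Cref{obs:quadrant} says that within each individual cell $V^P(p)$ all four half cells have a common area $h_p$, and hence $\area(V^P(p))=2h_p$. Fairness then supplies a single common cell area $A=\area(V^P(p))$ for every $p\in P$, which forces $h_p=A/2$ to be independent of $p$. Thus every half cell of $P$, regardless of which cell it comes from, has area $A/2$, establishing that all half cells of $P$ are equal.

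For the converse I would suppose all half cells of $P$ share a common area $h$. In particular the four half cells of each individual cell are equal, so by condition~(\ref{item:halfcells}) of \Cref{obs:quadrant} every $p$ is a Manhattan median of $V^P(p)$ and local optimality holds. Moreover each cell then has area $2h$ regardless of $p$, so all Voronoi cells have equal area and fairness holds as well; hence $P$ is \balanced.

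Since the substantive work is already carried out in \Cref{obs:quadrant}, I do not anticipate a genuine obstacle here. The one point requiring care is to keep the two distinct equalities cleanly separated, namely the intra-cell equality of the four half cells (encoding local optimality) versus the inter-cell equality of cell areas (encoding fairness), and to verify that the single hypothesis ``all half cells of $P$ are equal'' is exactly strong enough to encode both at once via the relation $\area(V^P(p))=2h_p$.
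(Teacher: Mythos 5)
Your proof is correct and follows exactly the route the paper intends: the paper simply states that \Cref{obs:quadrant} ``immediately implies'' the corollary, and your argument is the spelled-out version of that implication, combining the intra-cell equality from local optimality with the inter-cell equality from fairness via $\area(V^P(p))=2h_p$. No gaps; you have merely made explicit what the paper leaves implicit.
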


\subsection{Atomic Non-Grid Configurations}\label{sec:balanced_nongrid}
A simple family of \balanced sets arise from regular, $a\times b$ grids;  see \Cref{fig:AreaBal}.
In stark contrast to the Euclidean case, there exist a large variety of
other \balanced sets: \Cref{fig:equalArea} depicts balanced point sets for which 
\emph{no} cell is a rectangle. 

\textcolor{white}{this is a test}\vspace{-12pt}

\begin{figure}[htb]
	\centering
	\includegraphics[page=38]{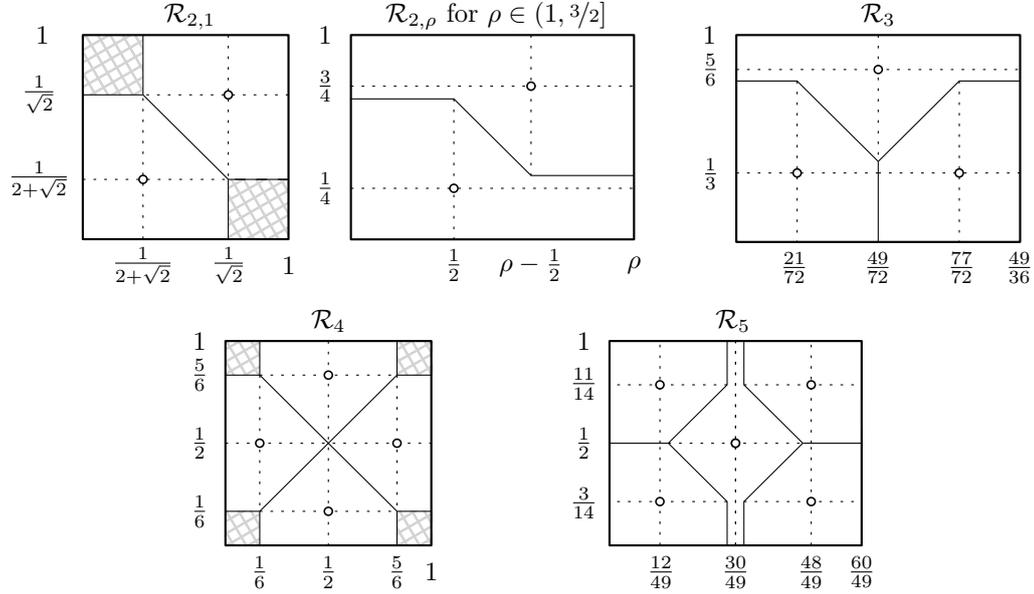}
	\caption{Non-grid examples of balanced point sets of cardinality $2$, $3$, $4$, and $5$.}
	\label{fig:equalArea}
\end{figure}

\begin{restatable}{lemma}{unique}\label{lem:unique}
The configurations $\mathcal R_{2,\rho},\mathcal R_{3},\mathcal R_{4},\mathcal R_{5}$, depicted in \Cref{fig:equalArea}, are \balanced. 
Moreover, $\mathcal R_{2,\rho},\rho \in [1,\nicefrac{3}{2}]$, and $\mathcal R_{3}$
are the only \balanced non-grid point sets  with two and three points, respectively.
\end{restatable}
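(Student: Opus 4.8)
The plan is to split the lemma into its two assertions --- that the four displayed sets are \balanced, and that for $n\in\{2,3\}$ they, together with the grid, are the \emph{only} \balanced sets --- and to base both on \Cref{th:balanced}, which reduces \balancedness to the single requirement that all half cells of $P$ have equal area. Note that whenever the configuration has no neutral zones this common value equals $\nicefrac{\rho}{2n}$, while in the presence of neutral zones it is correspondingly smaller; the computations below keep track of this.

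For the existence half I would verify the condition of \Cref{th:balanced} directly. For each of $\mathcal R_{2,\rho},\mathcal R_3,\mathcal R_4,\mathcal R_5$ the point coordinates are prescribed by \Cref{fig:equalArea}, so I would write down the bisector $\B(p_i,p_j)$ for each neighbouring pair using the trichotomy of \Cref{fig:bisector} (vertical, horizontal, or degenerate according to the comparison of $\Delta_x(p_i,p_j)$ and $\Delta_y(p_i,p_j)$), assemble each Voronoi cell as the polygon bounded by these bisectors and $\partial R$, and cut it along $\ell_v(p)$ and $\ell_h(p)$. Computing the resulting polygonal areas and checking that the four half cells of every point coincide is then a finite elementary computation; for the one-parameter family $\mathcal R_{2,\rho}$ this is carried out as a function of $\rho$ on $[1,\nicefrac{3}{2}]$.

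For uniqueness I would run \Cref{th:balanced} in reverse. In the case $n=2$, write $p_1=(x_1,y_1)$ and $p_2=(x_2,y_2)$; using the reflective symmetries of $R$ I may fix an ordering, and I then distinguish cases by the type of $\B(p_1,p_2)$. If the bisector is axis aligned (one of $\Delta_x,\Delta_y$ vanishes), both cells are rectangles, the Manhattan-median condition of \Cref{obs:quadrant} forces each point to the centre of its rectangle, and one obtains precisely the grid. In the remaining (diagonal and degenerate) cases the equality of the four half cells becomes a small system of equations in the coordinates whose solution set is exactly the family $\mathcal R_{2,\rho}$; the admissible interval $\rho\in[1,\nicefrac{3}{2}]$ emerges as the range for which this solution keeps both points inside $R$ and realises the prescribed median positions, so that no non-grid \balanced placement exists for $\rho>\nicefrac{3}{2}$.

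The case $n=3$ is the main obstacle, because three points admit several combinatorially distinct Voronoi diagrams. I would first use \Cref{obs:prop} --- in particular that a cell lies inside the rectangle spanned by its arms and that all bisectors from a common octant share a type --- to restrict how the three cells can meet, and enumerate the possible incidence patterns of $\B(p_1,p_2)$, $\B(p_1,p_3)$, and $\B(p_2,p_3)$. For each pattern I would impose that all twelve half cells have equal area: this already equalises the three cell areas, and the remaining half-cell constraints then over-determine the six coordinates, so that routine elimination leaves only the grid solution together with the single non-grid configuration $\mathcal R_3$. The delicate point is making the enumeration of incidence patterns exhaustive and verifying that every discarded pattern genuinely admits no \balanced solution rather than merely no evident one.
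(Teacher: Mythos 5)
Your overall strategy coincides with the paper's: direct verification of the equal-half-cell condition (\Cref{th:balanced}) for the four displayed configurations, and, for uniqueness, a case analysis over the combinatorial types of the Voronoi diagram, each type reduced to a system of coordinate equations via \Cref{obs:quadrant}. Your $n=2$ argument is essentially the one in the paper: it splits on whether $\B(p_1,p_2)$ is degenerate, and in the non-degenerate case derives $y_1=\nicefrac{h}{4}$, $y_2=\nicefrac{3}{4}h$ and $x_2=w-x_1$ from the half-cell areas, after which the remaining half-cell equation factors to give $x_1\in\{\nicefrac{h}{2},\nicefrac{w}{2}\}$, i.e.\ the grid or $\mathcal R_{2,\rho}$, with the range $\rho\in[1,\nicefrac{3}{2}]$ emerging from the feasibility constraints exactly as you describe.

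The gap is precisely where you flag it: for $n=3$ you provide no mechanism that makes the enumeration of combinatorial types finite and provably exhaustive, and \Cref{obs:prop} alone does not supply one. The paper's device is to stratify by the number of degenerate bisectors among the three pairs ($0$, $1$, $2$, or $3$). In the stratum with no degenerate bisector it first pins down one point: some cell must contain two corners of $R$, whence the half-cell condition forces $y_1=\nicefrac{5}{6}h$ and confines $p_2,p_3$ to specific octants of $p_1$; then the octant boundaries of $p_1$ and $p_2$, together with the \emph{partition line} through a breakpoint of $\B(p_1,p_2)$, subdivide the admissible locations of $p_3$ into finitely many regions inside each of which the Voronoi diagram is structurally constant (ten types, plus seven with one degenerate bisector and one each with two and three). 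Only after this reduction does your ``routine elimination'' become a finite, checkable task --- and even then the paper resorts to computer algebra to verify that each of the roughly twenty resulting systems admits no solution other than the grid and $\mathcal R_3$ at $\rho=\nicefrac{49}{36}$. Without this stratification (or an equivalent exhaustive subdivision), the uniqueness claim for $n=3$ is not established.
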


Simple calculations show that the configurations are \balanced. In order to prove the uniqueness, we make use of \Cref{obs:quadrant}. While the analysis for $n=2$ can be easily conducted manually, for $n=3$ the relative point positions lead to about 20 cases of structurally different Voronoi diagrams, which were checked using MATLAB\textsuperscript{\textregistered}. 
In the following, we provide proof details.
\Cref{clm:uniqueness2} in \Cref{subsub:2} establishes uniqueness for $n=2$; \Cref{clm:uniqueness3} 
in \Cref{subsub:3} shows uniqueness for $n=3$.

\subsubsection{Balanced Sets with Two Points}
\label{subsub:2}
In the following, we establish uniqueness for $n=2$.
\begin{claim}\label{clm:uniqueness2}
	For every $(1\times \rho)$ rectangle with $1\leq \rho\leq \nicefrac{3}{2}$, there exists (up to reflection) a unique point set $P$ with  $|P|=2$ such that $P$ is not a grid and fulfils \ref{item:P1}. The resulting configuration is $\mathcal R_{2,\rho}$ as illustrated in \Cref{fig:equalArea}.
	
	Moreover, if $\rho> \nicefrac{3}{2}$, there exists no such point set.
\end{claim}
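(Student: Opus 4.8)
The plan is to normalise by the symmetries of $R$ and then pin down the two points purely from the half-cell area conditions, treating the three bisector types separately. Throughout I use that, by \Cref{th:balanced}, property \ref{item:P1} is equivalent to all four members of $\mathcal H^\vert$ and all four members of $\mathcal H^-$ having area $\nicefrac{\rho}{4}$.

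First I would apply the horizontal and vertical reflections of $R$ to assume $p_1=(x_1,y_1)$ and $p_2=(x_2,y_2)$ satisfy $x_1\le x_2$ and $y_1\le y_2$. If $\Delta_x(p_1,p_2)=0$ or $\Delta_y(p_1,p_2)=0$, then $\B(p_1,p_2)$ is a straight line, both cells are rectangles, and the median condition centres each point, so $P$ is a grid; hence a non-grid balanced set has $\Delta_x,\Delta_y>0$ and a bent or degenerate bisector. I would then split according to whether $\B(p_1,p_2)$ is horizontal ($\Delta_x<\Delta_y$), vertical ($\Delta_x>\Delta_y$), or degenerate ($\Delta_x=\Delta_y$).

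In the horizontal case, $V^P(p_1)$ is two axis-aligned strips joined by the region below a slope-$-1$ diagonal of vertical extent $\Delta_y$. The key geometric input is \Cref{obs:prop}(\ref{itemA}): since the horizontal bisector must meet the \emph{top} half cell of $p_1$ (it separates $p_1$ from the higher point $p_2$), it cannot meet the bottom half cell, which is therefore the full-width rectangle $[0,\rho]\times[0,y_1]$. Equating its area to $\nicefrac{\rho}{4}$ gives $y_1=\nicefrac14$, and symmetrically $y_2=\nicefrac34$, so the bisector is centred at height $\nicefrac12$. Writing $\area(V^P(p_1))=\nicefrac{\rho}{2}$ (the sum of its two half cells) in terms of the three pieces then collapses to $u\,(x_1+x_2-\rho)=0$ with $u:=\nicefrac{\Delta_x}{2}>0$, forcing the central symmetry $x_1+x_2=\rho$; finally the vertical half-cell equation $x_1\bigl(\nicefrac12+u\bigr)=\nicefrac{\rho}{4}$ reduces to $(2x_1-1)(2x_1-\rho)=0$, and since $x_1=\nicefrac{\rho}{2}$ is excluded (it forces $\Delta_x=0$), we get $x_1=\nicefrac12$. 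This yields the unique candidate $p_1=(\nicefrac12,\nicefrac14)$, $p_2=(\rho-\nicefrac12,\nicefrac34)$, i.e.\ $\mathcal R_{2,\rho}$, which I would confirm balanced via \Cref{obs:quadrant}. Running the mirror computation for the vertical bisector produces $\nicefrac{\Delta_y}{2}=\nicefrac{(1-\rho)}{2}\le 0$, impossible for $\rho\ge 1$, so the horizontal case is the only source of solutions. For the range, $\mathcal R_{2,\rho}$ has $\Delta_x=\rho-1$ and $\Delta_y=\nicefrac12$, so the bisector is genuinely horizontal exactly when $\rho<\nicefrac32$; the transition $\Delta_x=\Delta_y$ is the degenerate case accounting for the endpoint $\rho=\nicefrac32$, while for $\rho>\nicefrac32$ the horizontal case is infeasible ($\rho-1>\nicefrac12$) and the vertical case is infeasible as above.

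The main obstacle is twofold. First, one must rigorously justify the shapes of all eight half cells — in particular that each median axis meets the bisector only in its straight parts, which is exactly where \Cref{obs:prop}(\ref{itemA}) is doing the work and makes the area bookkeeping non-circular. Second, the degenerate bisector $\Delta_x=\Delta_y$ creates neutral zones of positive measure, so the half-cell accounting changes and must be handled on its own terms; this is what fixes the boundary value $\rho=\nicefrac32$ and completes the non-existence argument for $\rho>\nicefrac32$. The same degenerate analysis also explains the endpoint $\rho=1$, where the candidate collapses onto the $2\times1$ grid.
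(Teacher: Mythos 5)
Your treatment of the non-degenerate horizontal case is correct and essentially coincides with the paper's: both arguments derive $y_1=\nicefrac{1}{4}$ and $y_2=\nicefrac{3}{4}$ from the full-width bottom/top half cells, then the central symmetry $x_1+x_2=\rho$, and finally the factored quadratic $(2x_1-1)(2x_1-\rho)=0$, whose non-grid root yields $\mathcal R_{2,\rho}$. Your explicit refutation of the vertical case (forcing $\Delta_y\leq 0$ when $\rho\geq 1$) is also sound and slightly more explicit than the paper's appeal to symmetry.

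The genuine gap is the degenerate case $\Delta_x(p_1,p_2)=\Delta_y(p_1,p_2)$, which you defer without carrying out, and your stated expectations for it are off in two respects. First, it is not merely an endpoint phenomenon: for \emph{every} $\rho$ in the range you must exclude further balanced non-grid sets whose bisector is degenerate, or uniqueness is unproven; likewise the non-existence claim for $\rho>\nicefrac{3}{2}$ is incomplete until degenerate-bisector solutions are ruled out there. The paper handles this via the quarter-cell property of \Cref{obs:quadrant}: the rectangular quarter cells $C_2$ and $C_4$ of each point force $x_1=y_1$ and $w-x_2=h-y_2$, whence $w=x_1+d+(w-x_2)=y_1+d+(h-y_2)=h$, so a degenerate bisector can only occur when $\rho=1$. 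Second, your remark that at $\rho=1$ ``the candidate collapses onto the $2\times1$ grid'' shows you have only the horizontal-case candidate in view; but the claim asserts that a non-grid balanced set \emph{does} exist at $\rho=1$, namely $\mathcal R_{2,1}$, and that configuration has a degenerate bisector — it can only be produced by the case you skipped. Until that case is worked out, existence and uniqueness at $\rho=1$, the status of the endpoint $\rho=\nicefrac{3}{2}$, and non-existence for $\rho>\nicefrac{3}{2}$ all remain open.
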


\begin{claimproof}
Let $P$ be a point set in an $(h\times w)$ rectangle $R$ consisting of two points $p_1=(x_1,y_1)$ and $p_2=(x_2,y_2)$ which is different from a grid and fulfils \ref{item:P1}.
Without loss of generality, we assume that $p_2$ lies in the top right quadrant of $p_1$ as in \cref{fig:equalAreaN2}.
We distinguish two cases depending on whether or not the vertical and horizontal distances between $p_1$ and $p_2$ are equal.

 \begin{figure}[htb]
	\centering
	\begin{subfigure}[t]{.4\textwidth}
		\centering
		\includegraphics[page=33]{EqualArea}
		\caption{Case: $|x_1-x_2|=|y_1-y_2|$}
		\label{fig:equalAreaN2A}
	\end{subfigure}
	\hfil
	\begin{subfigure}[t]{.4\textwidth}
		\centering
		\includegraphics[page=21]{EqualArea}
		\caption{Case: $|x_1-x_2|<|y_1-y_2|$}
		\label{fig:equalAreaN2B}
	\end{subfigure}
	\caption{Illustration of the proof of \Cref{clm:uniqueness2}.}
	\label{fig:equalAreaN2}
\end{figure}

Firstly, we consider the case that $\Delta_x(p_1,p_2)=\Delta_y(p_1,p_2)=:d$. For an illustration, see \cref{fig:equalAreaN2A}.
By \ref{item:P1} and \Cref{obs:quadrant}, diagonally opposite quarter cells have the same area. Consequently, the second and fourth rectangular quarter cells of $p_1$ and $p_2$
imply that $dx_1=dy_1 \iff x_1=y_1$,  and 
 $d(w-x_2)=d(h-y_2) \iff w-x_2=h-y_2$. 
 Hence, $w=x_1 + d + (w-x_2) = y_1 + d + (h-y_2) = h$ so it follows that $\rho = 1$.
 This yields $\mathcal R_{2,1}$ as shown in \Cref{fig:equalArea}.

Secondly, we consider the case that $\Delta_x(p_1,p_2)\neq\Delta_y(p_1,p_2)$.
Without loss of generality, we assume that $\Delta_x(p_1,p_2)<\Delta_y(p_1,p_2)$ as illustrated in \Cref{fig:equalAreaN2B}. Thus, $\B(p_1,p_2)$ is horizontal. By \ref{item:P1}, the bottom half cell of $p_1$ and the top half cell of $p_2$ have an area of $\nicefrac{1}{4}\, wh$ each. Because their width is $w$, it follows that $y_1=\nicefrac{1}{4}h$ and $y_2=\nicefrac{3}{4}h$.
By symmetry of the bisector, it follows that the height of the left half cell of $p_1$ equals the height of the right half cell of $p_2$. Because the areas of these half cells are equal, their respective widths must  also agree,  i.e., $x_2=w-x_1$.
Moreover, the left half cell of $p_1$ has an area of 
\begin{align*}
x_1(\nicefrac{h}{4}+\nicefrac{1}{2}(\nicefrac{h}{2}+(x_2-x_1))=\nicefrac{x_1}{2}(h+x_2-x_1)&=\nicefrac{x_1}{2}(h+w-2x_1)\stackrel{!}{=}\nicefrac{wh}{4}\\
&\iff x_1\in \{\nicefrac{h}{2}, \nicefrac{w}{2}\} \, .
\end{align*}

If $x_1=\nicefrac{w}{2}$, $P$ is a grid. For $x_1=\nicefrac{h}{2}$, we obtain the configuration $\mathcal R_{2,\rho} $ depicted in \Cref{fig:equalArea}. Note that it is necessary that $\nicefrac{h}{2}< w-\nicefrac{h}{2}\iff h< w$ and $(y_2-y_1)\geq (x_2-x_1)\iff w\leq \nicefrac{3}{2}h$. This completes the proof of the claim.
\end{claimproof}

\subsubsection{Balanced Sets with Three Points}
\label{subsub:3}
In the following, we establish uniqueness for $n=3$.
\begin{claim}\label{clm:uniqueness3}
	Let $P$ be a point set in a rectangle $R$ such that $|P|=3$, $P$ is not a grid, and $P$ satisfies \ref{item:P1}.  Then $\rho(R)=\nicefrac{49}{36}$ and $(R,P)$ is the configuration $\mathcal R_3$.
\end{claim}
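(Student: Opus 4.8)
The plan is to prove uniqueness by a systematic case analysis over the combinatorial types of the Manhattan Voronoi diagram of the three points, turning the balancedness hypothesis (property \ref{item:P1}) into a system of area equations by way of \Cref{obs:quadrant}, and then solving each system. Throughout I write $R$ as an $h\times w$ rectangle with $\rho=\nicefrac{w}{h}$ and set $p_i=(x_i,y_i)$. By \Cref{th:balanced} the hypothesis is equivalent to demanding that every half cell of $P$ has area $\nicefrac{1}{6}\cdot\area(R)$, and by \Cref{obs:quadrant} this in turn is equivalent to the quarter-cell property, namely that diagonally opposite quarter cells of each cell have equal area. It is this equal-area formulation I would work with.

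First I would normalize. Sorting the points so that $x_1\le x_2\le x_3$, the cell of $p_1$ meets the left edge of $R$ and the cell of $p_3$ meets the right edge. I would then cut down the number of cases using the reflection symmetries of the rectangle (about its horizontal and vertical mid-lines, together with the diagonal reflections available when $\rho=1$), which identify many a priori distinct configurations. Second, I would enumerate the combinatorially distinct diagrams. The type of a diagram is fixed by two pieces of data: for each of the three pairs $(p_i,p_j)$, the type of its bisector -- vertical, horizontal, or degenerate, determined by the sign of $\Delta_x(p_i,p_j)-\Delta_y(p_i,p_j)$ via \Cref{obs:prop} -- and the incidence pattern of the three cells, i.e.\ which pairs of cells are adjacent and which boundary edges each cell meets. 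Since by \Cref{obs:prop}(\ref{itemB}) the bisector type is constant across an octant, running through the admissible relative octant positions of the three points controls both pieces of data and produces on the order of twenty structurally different Voronoi diagrams.

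Third, for each type I would read off the quarter-cell areas as explicit piecewise-polynomial expressions in $x_i,y_i,h,w$ (the diagonal bisector segments contribute the only quadratic terms), impose the equal-area equations of \Cref{obs:quadrant}, and carefully record the inequalities that define the type: the assumed coordinate orderings, the bisector-type inequalities $\Delta_x\lessgtr\Delta_y$ for each pair, and containment in $R$. Solving each resulting system, I expect every type but one to be either infeasible subject to its defining inequalities, or to force equal spacing and collinearity -- that is, a $1\times 3$ grid, which is excluded by hypothesis. The single surviving type yields a solution that pins the aspect ratio to $\rho=\nicefrac{49}{36}$ and the positions to $\mathcal R_3$ up to reflection, which is the assertion of the claim.

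The main obstacle is the bookkeeping of the enumeration itself: guaranteeing that the list of combinatorial types is complete and that each is translated into the correct area system with the correct feasibility inequalities. The individual systems are only mildly nonlinear, and each is routine to solve once set up, so the entire difficulty lies in the completeness and correctness of the case distinction -- which is precisely why the computation was verified with MATLAB.
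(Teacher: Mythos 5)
Your proposal is correct and follows essentially the same route as the paper: reduce balancedness to the quarter-cell property via \Cref{obs:quadrant}, enumerate the roughly twenty combinatorially distinct three-point Manhattan Voronoi diagrams, and solve the resulting polynomial area systems (with computer assistance) to find that only $\mathcal R_3$ with $\rho=\nicefrac{49}{36}$ survives. The only minor imprecision is your suggestion that relative octant positions alone control the combinatorial type; the paper additionally needs the ``partition lines'' through the breakpoints of the bisector $\B(p_1,p_2)$ to separate cases where $p_3$ lies in the same octant but produces a different cell-adjacency pattern, and it organizes the outer case split by the number of degenerate bisectors rather than by a coordinate ordering.
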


\begin{claimproof}

We denote the height and width of $R$ by $h$ and $w$ respectively, and distinguish four cases depending on the number of degenerate bisectors of the points in $P$. We start with the case of no degenerate bisectors. 

\subparagraph{No degenerate bisectors.} Firstly, let us assume that $P$ contains no degenerate bisector. Therefore, every corner of $R$ is contained in one of the three cells and the cell of one point $p_1=(x_1,y_1)$ contains two corners of $R$; without loss of generality, we assume that $p_1$
contains the two top corners. Then, the top half cell of $p_1$ has width~$w$ and an area of $\nicefrac{1}{6}\, wh$ by property \ref{item:P1}.
Consequently, $y_1=\nicefrac{5}{6}\, h$. Moreover, the other two points lie in $O_6(p_1)\cup O_7(p_1)$; otherwise the cell of $p_1$ would not contain both top corners. 

\newcommand{\s}{.88}
\begin{figure}[tp]
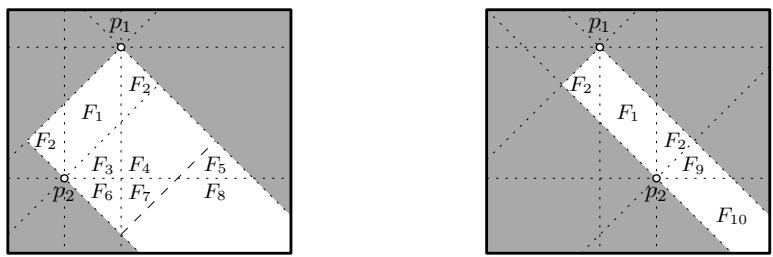

	\centering
	\begin{subfigure}{.35\textwidth}
		\centering
		\includegraphics[page=28,scale=\s]{EqualArea}
		\caption{Configurations where $x_1 > x_2$.}
		\label{fig:nongrid3A}
	\end{subfigure}\hfil
	\begin{subfigure}{.35\textwidth}
		\centering
		\includegraphics[page=29,scale=\s]{EqualArea} 
		\caption{Configurations where $x_1 \leq x_2$.}
		\label{fig:nongrid3B}
	\end{subfigure}
	\caption{Illustration of the cases in the proof of \Cref{clm:uniqueness3}.}
	\label{fig:nongrid3}
\end{figure}

\newcommand{\w}{.27\textwidth}
\begin{figure}[htbp]
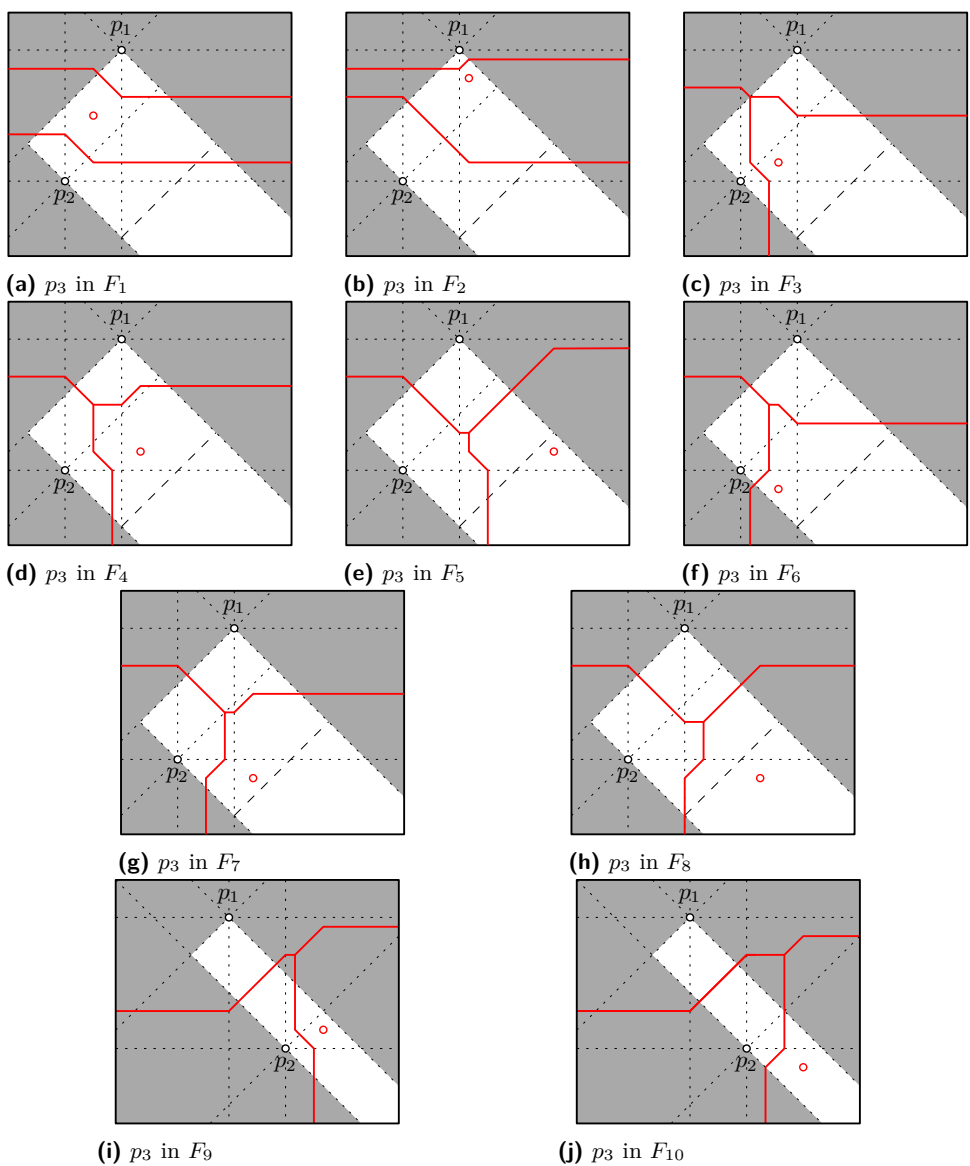

	\centering
	\begin{subfigure}{\w}
		\centering
		\includegraphics[page=40,scale=\s]{EqualArea}
		\caption{$p_3$ in $F_1$}
		\label{fig:n=3_F1}
	\end{subfigure}\hfil
	\begin{subfigure}{\w}
		\centering
		\includegraphics[page=41,scale=\s]{EqualArea}
		\caption{$p_3$ in $F_2$}
		\label{fig:n=3_F2}
	\end{subfigure}\hfil
	\begin{subfigure}{\w}
		\centering
		\includegraphics[page=42,scale=\s]{EqualArea}
		\caption{$p_3$ in
			$F_3$}
		\label{fig:n=3CellIII}
	\end{subfigure}\hfil
	\begin{subfigure}{\w}
		\centering
		\includegraphics[page=43,scale=\s]{EqualArea}
		\caption{$p_3$ in 
			$F_4$}
		\label{fig:n=3CellIVa}
	\end{subfigure}\hfil
	\begin{subfigure}{\w}
		\centering
		\includegraphics[page=44,scale=\s]{EqualArea}
		\caption{$p_3$ in 
			$F_5$}
		\label{fig:n=3CellIVb}
	\end{subfigure}\hfil
	\begin{subfigure}{\w}
		\centering
		\includegraphics[page=45,scale=\s]{EqualArea}
		\caption{$p_3$ in 
			$F_6$}
		\label{fig:n=3CellI}
	\end{subfigure}\hfil
	\begin{subfigure}{\w}
		\centering
		\includegraphics[page=46,scale=\s]{EqualArea}
		\caption{$p_3$ in 
			$F_7$}
		\label{fig:n=3CellIIa}
	\end{subfigure}\hfil
	\begin{subfigure}{\w}
		\centering
	\includegraphics[page=47,scale=\s]{EqualArea}
		\caption{$p_3$ in 
			$F_8$}
		\label{fig:n=3CellIIb}
	\end{subfigure}
	
	\begin{subfigure}{.3\textwidth}
		\centering
		\includegraphics[page=49,scale=\s]{EqualArea}
		\caption{$p_3$ in 
			$F_9$}
		\label{fig:n=3CellI'}
	\end{subfigure}\hfil
	\begin{subfigure}{.3\textwidth}
		\centering
		\includegraphics[page=50,scale=\s]{EqualArea}
		\caption{$p_3$ in 
			$F_{10}$}
		\label{fig:n=3CellII'}
	\end{subfigure}\hfil
	\caption{The ten combinatorially different Voronoi diagrams with no degenerate bisector \mbox{dependent} on the relative position of the third point $p_3$.}
	\label{fig:nongridn=3examples}
\end{figure}

Furthermore, at least one other point $p_2=(x_2,y_2)$ contains a corner of $R$ in its cell, without loss of generality the bottom left corner of $R$. This implies that the third point $p_3$ lies in $\bigcup_{i\in \{1,2,3,8\}}O_i(p_2)$. We distinguish the cases $x_1 > x_2$ and $x_1 \leq x_2$ which are illustrated in  \Cref{fig:nongrid3A,fig:nongrid3B} respectively. 
Moreover, the octants of $p_1$ and $p_2$ as well as the so-called partition line completing the diamond around the rightmost breakpoint of the bisector between $\B(p_1,p_2)$ (the dashed line in \Cref{fig:nongrid3A}; see Averbakh et al.~\cite{AveBerKalKra15} for more details)
subdivide the possible locations of the third point into regions which are illustrated in \Cref{fig:nongrid3}. As a result, for every position of~$p_3$ within a region, the resulting Voronoi diagram is structurally identical; see also \Cref{fig:nongridn=3examples}. Note also that all regions with the same label result in fully symmetric configurations.

For each configuration, we can describe the areas of the quarter cells dependent on the three point coordinates.
Using \ref{item:P2} and \Cref{obs:quadrant}, we obtain a number of equations.
Carrying out the involved calculations by hand is rather tedious, so we made use of MATLAB\textsuperscript{\textregistered}. It turns out that there exists a solution if and only if $\rho=\nicefrac{49}{36}$, and that $\mathcal R_3$ is the unique solution.

In the following, we present all combinatorially different Voronoi diagrams of three points containing one, two, and three degenerate bisectors, respectively. 
Exploiting \ref{item:P2} and \Cref{obs:quadrant},  we then obtain a system of equations for each diagram dependent on the points' coordinates. 
Using  MATLAB\textsuperscript{\textregistered}, we guarantee that none of the diagrams supports a balanced set.

\subparagraph{One degenerate bisector.} 
Secondly, we consider the case that $P=\{p_1,p_2,p_3\}$ contains exactly one degenerate bisector.
We may assume without loss of generality that $p_1$ has no degenerate bisector and  that
$\Delta_y(p_1,p_2)$ exceeds all of $\Delta_x(p_1,p_2)$, $\Delta_x(p_1,p_3)$, and $\Delta_y(p_1,p_3)$; otherwise we exchange the labels of $p_2$ and $p_3$ or rotate the configuration.
Furthermore, we may assume that $p_2$ lies below $p_1$ and does not lie to the right of $p_1$, see \Cref{fig:degenerateCases}.

By assumption, $p_2$ and $p_3$ share a degenerate bisector. Therefore $p_3$ lies on one of the diagonal lines through $p_2$. Moreover, $p_3$ lies above $p_2$ and not too far to the left and right of $p_1$, because of our assumption that $\Delta_y(p_1,p_2)>\Delta_x(p_1,p_3),\Delta_y(p_1,p_3)$.
As before, the octants of $p_1$ and $p_2$, as well as the partition line induced by the leftmost breakpoint of the bisector $\B(p_1,p_2)$ (represented by the dashed segment in \Cref{fig:degenerateCases} that completes the diamond about the leftmost breakpoint of $\B(p_1,p_2)$), subdivide the location of $p_3$ into seven segments which are illustrated in \Cref{fig:degenerateCases}. Placing $p_3$ on different segments results in combinatorially different Voronoi diagrams which are depicted in \Cref{fig:degnerate1}.

\begin{figure}[htbp]
	\centering
	\begin{subfigure}{\w}
		\centering
		\includegraphics[page=2,scale=\s]{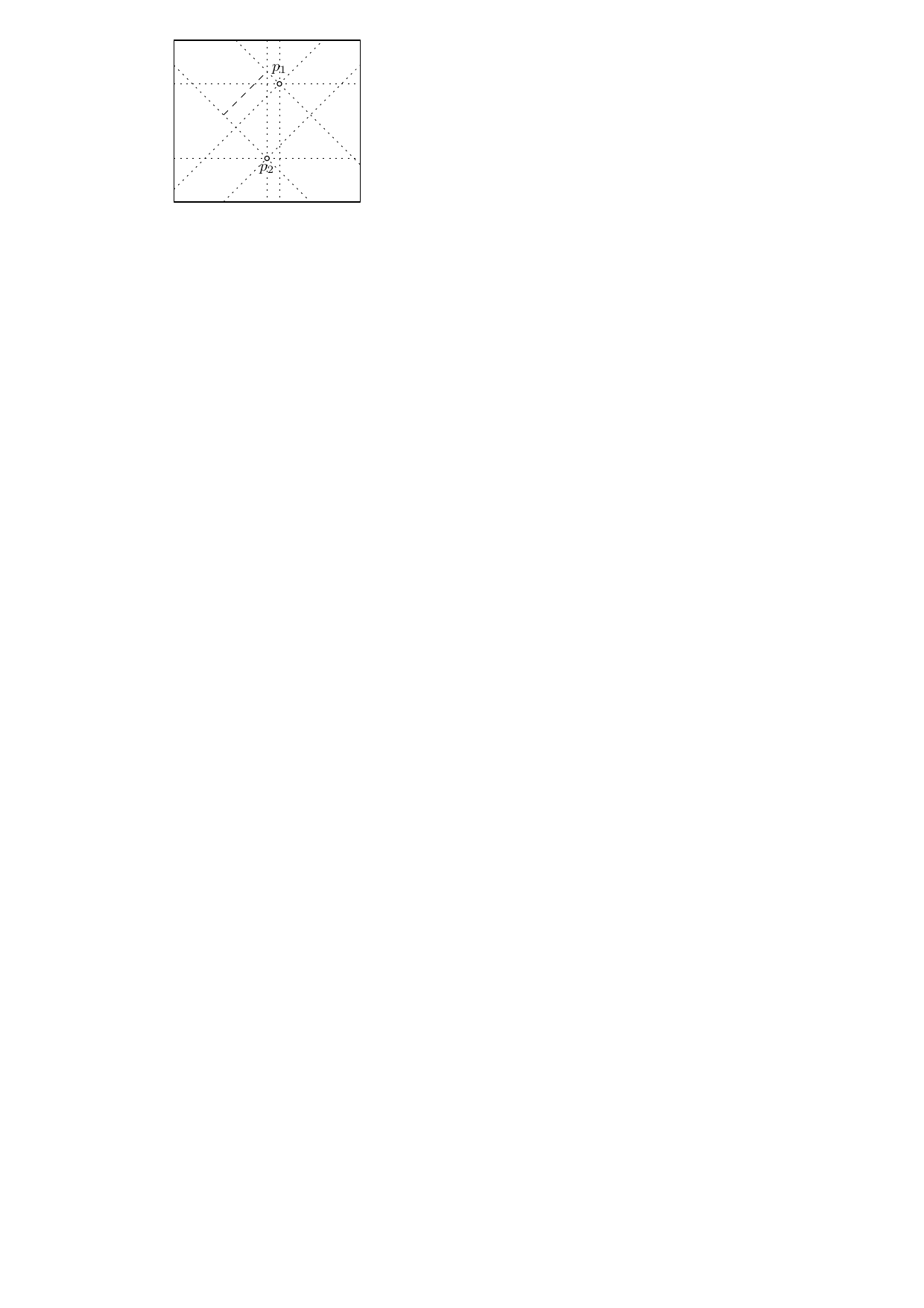}
		\caption{segments for $p_3$}
		\label{fig:degenerateCases}
	\end{subfigure}\hfil
	\begin{subfigure}{\w}
		\centering
		\includegraphics[page=3,scale=\s]{n3_1degenerate}
		\caption{$p_3$ in $s_1$}
	\end{subfigure}\hfil
	\begin{subfigure}{\w}
	\centering
	\includegraphics[page=4,scale=\s]{n3_1degenerate}
	\caption{$p_3$ in $s_2$}
\end{subfigure}\hfil

	\begin{subfigure}{\w}
	\centering
	\includegraphics[page=5,scale=\s]{n3_1degenerate}
	\caption{$p_3$ in $s_3$}
\end{subfigure}\hfil
	\begin{subfigure}{\w}
	\centering
	\includegraphics[page=6,scale=\s]{n3_1degenerate}
	\caption{$p_3$ in $s_4$}
\end{subfigure}\hfil
	\begin{subfigure}{\w}
	\centering
	\includegraphics[page=7,scale=\s]{n3_1degenerate}
	\caption{$p_3$ in $s_5$}
\end{subfigure}\hfil
	\begin{subfigure}{\w}
	\centering
	\includegraphics[page=8,scale=\s]{n3_1degenerate}
	\caption{$p_3$ in $s_6$}
\end{subfigure}\hfil
	\begin{subfigure}{\w}
	\centering
	\includegraphics[page=9,scale=\s]{n3_1degenerate}
	\caption{$p_3$ in $s_7$}
\end{subfigure}\hfil

	\caption{The seven segments producing Voronoi diagrams with one degenerate bisector.}
	\label{fig:degnerate1}
\end{figure}

\subparagraph{Two degenerate bisectors.}
Without loss of generality, we may assume that $p_3$ has a degenerate bisector with both $p_1$ and $p_2$. Since $\B(p_1,p_2)$ is non-degenerate, $p_1$ and $p_2$ lie on different diagonals and we may assume without loss of generality that $p_3$ is the rightmost point of $P$ and $\Delta_x(p_1,p_3)\geq \Delta_x(p_1,p_2)$.
Consequently, we obtain a configuration as depicted in \Cref{fig:degenerateN=3two}.

\begin{figure}[htbp]
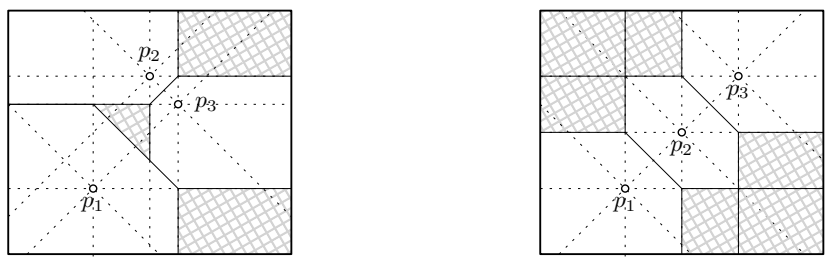

	\centering
	\begin{subfigure}{.5\textwidth}
		\centering
		\includegraphics[page=52,scale=\s]{EqualArea}
		\caption{Two degenerate bisectors.}
		\label{fig:degenerateN=3two}
	\end{subfigure}%
	\begin{subfigure}{.5\textwidth}
		\centering
		\includegraphics[page=51,scale=\s]{EqualArea}
		\caption{Three degenerate bisectors.}
		\label{fig:degenerateN=3three}
	\end{subfigure}
	\caption{Voronoi diagrams of three points containing two degenerate bisectors.}
	\label{fig:degenerateTwoThreee}
\end{figure}

\subparagraph{Three degenerate bisectors.}
If all three bisectors are degenerate, then the three points lie on a common diagonal line: consider two points on a line of slope~$+1$. Since the diagonals through the two points of slope~$-1$ are parallel, the third point must lie on the diagonal of slope~$+1$; see \Cref{fig:degenerateN=3three}.  Without loss of generality (allowing for reflection and rotation and renaming the points) we may assume the illustrated labelling.

In showing that the resulting systems of equations have no solutions, we prove that  $\mathcal R_3$ is the unique balanced non-grid configuration with three points.
\end{claimproof}

\subsection{An Infinite Family of Balanced Configurations}\label{sec:balanced_family}
Observe that $\mathcal R_{2,\rho},\mathcal R_{3},\mathcal R_{4}$, and $\mathcal R_{5}$ are \emph{atomic}, i.e., they cannot be
decomposed into subconfigurations whose union of Voronoi cells is a rectangle. 
We show how they serve as building blocks to induce large families of balanced configurations.

\begin{theorem}\label{lem:existence}
	For every integer $n\geq 2$,  there exists a rectangle $R$ and a set $P$ of $n$ points 
such that $P$ is \balanced and no Voronoi cell is a rectangle. 
\end{theorem}
\begin{proof}
For every $n=3k+5\ell$ with $k,\ell\in\{0,1,\ldots\}$, we construct a 
configuration by  combining $k$ blocks of $\mathcal R_3$ and $\ell$ blocks of $\mathcal R_5$, as illustrated in \Cref{fig:equalAreaN}. This yields configurations with $n$ points in which
$n=3k$ for $k\geq 1$, $n=3k+2=3(k-1)+5$ for $k\geq 1$, or 
$n=3k+1=3(k-3)+10$  for $k\geq 3$, so we obtain configurations for all $n\geq 8$ and $n=3,5,6$.

\begin{figure}[htbp]
	\centering\begin{subfigure}[t]{.9\textwidth}
		\centering
		\includegraphics[page=23]{EqualArea}
		\caption{Combining $k$ blocks of $\mathcal R_3$ and $l$ blocks of $\mathcal R_5$ for a configuration with $n=3k+5l$ points in a rectangle $R$ with $\rho(R)=\nicefrac{1}{49}(36k+60l)$.}
		\label{fig:equalAreaN}
	\end{subfigure}
    \vspace{6pt}
    
	\begin{subfigure}[t]{.45\textwidth}
		\centering
		\includegraphics[page=24]{EqualArea}
		\caption{Combining $k$ blocks of $\mathcal R_2$ for a configuration with $n=2k$ points.}
		\label{fig:equalArea2N}
	\end{subfigure}\hfill
		\begin{subfigure}[t]{.45\textwidth}
		\centering
		\includegraphics[page=65]{EqualArea}
		\caption{Combining $k$ blocks of $\mathcal R_4$ with partial overlap for a configuration with $n=3k+1$ points.
		}
		\label{fig:equalArea7}
	\end{subfigure}
	\caption{Illustration of the proof of \cref{lem:existence}.}
	\label{key}
\end{figure}

Balanced configurations with $n=2k$, $k\in\mathbb N$, points are obtained by combining $k$ blocks of~$\mathcal R_2$ as shown 
in \Cref{fig:equalArea2N}; alternatively, for the missing cases of $n=2,4$, recall the configurations in \Cref{fig:equalArea}.

Lastly, \cref{fig:equalArea7} depicts a balanced configuration for the case of $n=3k+1$ points by combining $k$ blocks of $\mathcal R_4$ with partial overlap. In particular, this contains the last missing case of  
 $n=7$ points. Note that, in contrast to the previous configurations, these configurations contain degenerate bisectors and neutral regions. (We do not know of examples for $n=7$ without degenerate bisectors.)
\end{proof}

While none of the configurations in \Cref{lem:existence} contains a rectangular Voronoi cell,
they contain many immediate repetitions of the same atomic components. In fact, there are arbitrarily large \emph{non-repetitive} balanced configurations without
directly adjacent congruent atomic subconfigurations. 

\begin{theorem}
\label{th:irregular}
There is an injection between the family of 0-1 strings
and a family of non-repetitive \balanced configurations without any rectangular Voronoi cells.
\end{theorem}

\begin{proof}
For a given 0--1 string $\mathcal S$ of length $s$, we use $s$ pairs of blocks $\mathcal R_3$ and its
reflected version $\mathcal R_3'$ to build a sequence of $2s$ blocks. 
As shown in \Cref{fig:blocks678}, 
we insert a block~$\mathcal R_5$ after the $i$th pair if $\mathcal S$
has a $1$ in position $i$; otherwise the block sequence remains.
\begin{figure}[htb]
	\centering
	\centering
	
	\includegraphics[page=35]{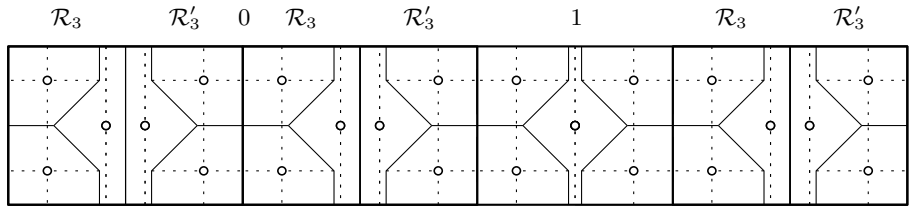}
	\caption{Illustration of the proof of \Cref{th:irregular}.
		 The configuration represents the string 01.
	}
	\label{fig:blocks678}
\end{figure}
\end{proof}

 \section{The One-Round Manhattan Voronoi Game}\label{sec:black}

An instance of the One-Round Manhattan Voronoi Game consists of a rectangle~$R$ and the number $n$ of points to be played by each player. 
Without loss of generality, $R$ has height~$1$ and width $\rho\geq 1$.
White chooses a set $W$ of $n$ white
points  in~$R$, followed by Black selecting
a set $B$ of $n$ black points, with $W\cap B=\varnothing$. Each player scores the area consisting of the points that are closer to one of their facilities than to any one of their opponent's.
Hence, if two points of one player share a degenerate bisector, the possible neutral regions are assigned to this player. Therefore, by replacing each degenerate bisector between points of one player with a (w.l.o.g. horizontal) non-degenerate bisector, each player scores the area of its (horizontally enlarged) Manhattan Voronoi cells.
With slight abuse of notation, we denote the resulting (horizontally enlarged) Voronoi cells of coloured point sets by $V^{W\cup B}(p)$ in the same way as before.
The player with the higher score wins, or
the game ends in a tie.

For an instance $(R,n)$ and a set $W$ of $n$ 
white points, a set $B$ of $n$ black points is a \emph{winning set for
Black} if Black wins the game by playing $B$; 
likewise, $B$ is a \emph{tie set} if the game ends in a tie. \new{For a given set $W$ of white points,}  a black point~$b$
is a \emph{winning point} if 
\new{$\area(V^{W\cup \{b\}}(b))$} exceeds $\nicefrac{1}{2n}\cdot \area(R)$. 
A white point set $W$ is \emph{unbeatable} if it does not admit a winning set for Black, and $W$ is  a  \emph{winning set \new{for White}} if there exists neither a tie nor a winning set for Black.
If Black or White can always identify a winning set, 
we say that they have a \emph{winning strategy}.

Despite the possible existence of degenerate bisectors for Manhattan distances,  we show that Black has a winning strategy if and only if
Black has a winning point. 
We make use of the following two lemmas.

\begin{restatable}{lemma}{steal}\label{lem:steal}
	Consider a rectangle $R$ with a set $W$ of white points. Then for every $\varepsilon>0$ and every half cell $H$ of $W$, Black can place a point $b$ such that the area of
	$V^{W\cup\{b\}}(b)\cap H$
	 is at least $\left(\area(H)-\varepsilon\right)$.
\end{restatable}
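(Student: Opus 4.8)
The plan is to steal almost all of $H$ with a single black point $b$ placed immediately adjacent to the white point that owns $H$, on the side of $H$. By the symmetry of the four half-cell types, I would treat the case where $H$ is the left half cell of some white point $w=(x_w,y_w)\in W$; the right, top, and bottom cases follow identically after reflecting or swapping axes. We may assume $\area(H)>0$, so that $x_w>0$. First I would set $b:=(x_w-\delta,y_w)$ for a small parameter $\delta>0$ to be fixed at the end, chosen so that $b\in R$ and $b\notin W$ (both hold once $\delta<x_w$ and $\delta$ avoids the finitely many values for which $b$ would meet a white point of height $y_w$).

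The heart of the argument is a single distance identity. Since $b$ and $w$ share a $y$-coordinate, their bisector is the vertical line $x=x_w-\nicefrac{\delta}{2}$ (the degenerate-free situation of \cref{fig:bisector2}), and for every $q=(x,y)$ with $x\le x_w-\delta$ one computes directly that $d_M(q,b)=d_M(q,w)-\delta$. Now fix such a $q$ that also lies in $H$. Because $H\subseteq V^W(w)$ is part of the \emph{open} Voronoi cell of $w$, we have the strict inequality $d_M(q,w)<d_M(q,w')$ for every other white point $w'$. Chaining the two facts yields
\[
d_M(q,b)=d_M(q,w)-\delta< d_M(q,w)< d_M(q,w')\qquad\text{for all }w'\in W\setminus\{w\},
\]
and $d_M(q,b)<d_M(q,w)$ holds trivially. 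Hence $q$ is strictly closer to $b$ than to every white point, i.e.\ $q\in V^{W\cup\{b\}}(b)$, which establishes the containment $H\cap\{x\le x_w-\delta\}\subseteq V^{W\cup\{b\}}(b)$.

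It then remains to bound what is lost. The uncaptured part of $H$ is confined to the vertical strip $\{x_w-\delta<x\le x_w\}$, which has width $\delta$ and, since $R$ has height $1$, meets $H$ in a region of area at most $\delta$. Therefore $\area\!\left(V^{W\cup\{b\}}(b)\cap H\right)\ge \area(H)-\delta$, and taking $\delta\le\varepsilon$ finishes the proof. For a top or bottom half cell the corresponding strip is horizontal, of height $\delta$ and width at most $\rho$, so one instead requires $\delta\le\nicefrac{\varepsilon}{\rho}$.

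I do not expect a serious obstacle here. The one point that genuinely needs care is the competition against white points \emph{other} than $w$, and this is exactly what is resolved by combining the openness of $V^W(w)$ with the identity $d_M(q,b)=d_M(q,w)-\delta$. The only remaining fiddly matters are the boundary degeneracies, namely guaranteeing $b\in R$ and $b\notin W$, both of which are dispatched by choosing $\delta$ sufficiently small.
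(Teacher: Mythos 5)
Your proof is correct and follows essentially the same approach as the paper: place $b$ at an infinitesimal offset from $w$ on the side of $H$, observe that the bisector $\B(b,w)$ is then a vertical (resp.\ horizontal) segment so that $b$ captures all of $H$ beyond it, and bound the lost area by a thin strip whose width tends to $0$ with the offset. Your explicit distance identity $d_M(q,b)=d_M(q,w)-\delta$ combined with the openness of $V^W(w)$ is a welcome spelling-out of the containment step that the paper leaves implicit.
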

\begin{proof}
	Without loss of generality, we consider the left half cell $H$ of some $w\in W$ as in \Cref{fig:stealing}. 
	\begin{figure}[b]
		\centering
		\begin{subfigure}[t]{.45\textwidth}
			\centering
			\includegraphics[page=1]{Cell}
			\caption{A left half cell $H$.}
			\label{fig:cell}
		\end{subfigure}
		\hfil
		\begin{subfigure}[t]{.45\textwidth}
			\centering
			\includegraphics[page=2]{Cell}
			\caption{A point capturing $H$ up to any $\varepsilon>0$.
			}
			\label{fig:cellBlack}
		\end{subfigure}
		\caption{Illustration of the proof of \Cref{lem:steal}.}
		\label{fig:stealing}
	\end{figure}
	By placing $b$ slightly to the left of $w$, the bisector~$\B(b,w)$ is a vertical segment between $b$ and $w$. Therefore $V^{W\cup\{b\}}(b)$ contains all points of $H$ to the left of $\B(b,w)$. The area difference between $V^{W\cup\{b\}}(b)\cap H$ and $H$ is bounded above by the product of the total length of the top and bottom arm of~$H$ and (half) the distance of $b$ and~$w$. Consequently, by placing $b$ close enough, the difference drops below any fixed $\varepsilon>0$.
\end{proof}

In fact, White must play a balanced set; otherwise Black can win.
\begin{restatable}{lemma}{halving}\label{lem:halving}
	Let $W$ be a set of $n$ white points in a rectangle $R$. If any half cell of~$W$ has an area different from $\nicefrac{1}{2n}\cdot\area(R)$, then Black has a winning strategy. 
\end{restatable}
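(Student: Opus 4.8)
The plan is to exploit \Cref{lem:steal} together with a counting argument over half cells. Suppose some half cell of $W$ has area different from $\nicefrac{1}{2n}\cdot\area(R)$. Since the total area of $R$ is split among the $n$ Voronoi cells, and each cell is split into two half cells of equal area only if $W$ is balanced, the assumption means $W$ is not balanced. By \Cref{th:balanced}, balancedness is equivalent to all half cells having equal area; since the half cells in each of the two families $\mathcal H^\vert$ and $\mathcal H^-$ partition $R$ (each family tiles $R$ by cutting every cell along one line through its generator), each family has total area $\area(R)$ and average half-cell area $\nicefrac{1}{2n}\cdot\area(R)$. Hence if one half cell deviates from this average, some half cell $H$ must \emph{exceed} it: $\area(H) > \nicefrac{1}{2n}\cdot\area(R)$.

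Next I would let $\delta := \area(H) - \nicefrac{1}{2n}\cdot\area(R) > 0$. By \Cref{lem:steal}, Black can place a single point $b_1$ capturing area at least $\area(H)-\varepsilon$ inside $H$ (hence inside some white cell); choosing $\varepsilon < \delta$, this point $b_1$ already satisfies $\area(V^{W\cup\{b_1\}}(b_1)) > \nicefrac{1}{2n}\cdot\area(R)$, i.e.\ $b_1$ is a winning point in the sense defined above. The remaining task is to upgrade a single winning point into a full winning \emph{set} of $n$ black points. The natural strategy is for Black to place $b_1$ in the oversized half cell as above, and then place the remaining $n-1$ points so as to collectively guarantee that Black's total score strictly exceeds White's.

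The cleanest way to do the bookkeeping is a pairing/charging argument: Black plays $b_1$ to steal almost all of the oversized half cell $H$, and plays each of the other $n-1$ black points $\varepsilon$-close to a distinct white point in a different cell, each time stealing nearly the full area of an appropriately chosen half cell of that white point. Because the half cells in a single family tile $R$, and the stolen half cells can be chosen disjoint, Black's total captured area approaches the sum of $n$ half-cell areas, one of which ($H$) exceeds the average by $\delta$ while the others meet at least the average; taking $\varepsilon$ small enough that the cumulative loss $n\varepsilon$ stays below $\delta$ yields a Black score exceeding $\nicefrac12\area(R)$, and thus a win. I expect the main obstacle to be controlling the \emph{interaction} between the black points: \Cref{lem:steal} is stated for a single black point against $W$ alone, so I must verify that placing black points near several distinct white generators does not let their cells interfere destructively, and that the captured regions remain essentially disjoint. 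Ensuring the chosen half cells are pairwise disjoint (e.g.\ by selecting one half cell per white cell, from a consistent left/right or top/bottom orientation within a single family) and bounding the total perimeter-times-distance loss uniformly in $\varepsilon$ is the delicate part; the area-counting and the final strict inequality are then routine.
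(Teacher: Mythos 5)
The first half of your argument is sound and matches the paper: since each of the families $\mathcal H^\vert$ and $\mathcal H^-$ tiles $R$ into $2n$ half cells of total area $\area(R)$, a half cell deviating from the average forces the existence of a half cell $H$ with $\area(H)>\nicefrac{1}{2n}\cdot\area(R)$, and \Cref{lem:steal} lets Black capture it. The gap is in your accounting for the other $n-1$ points. You assert that each remaining white point admits a half cell whose area "meets at least the average" $\nicefrac{1}{2n}\cdot\area(R)$. This does not follow from the hypothesis: the lemma does not assume the white Voronoi cells have equal areas, only that \emph{some} half cell is off-average. A white point $w$ whose cell has area well below $\nicefrac{1}{n}\cdot\area(R)$ has \emph{all four} of its half cells below the average. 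Concretely, for $n=2$ take $W$ with $\area(V^W(w_1))=0.9\,\area(R)$ split into two vertical half cells of $0.45\,\area(R)$ each, and $\area(V^W(w_2))=0.1\,\area(R)$ split evenly in both directions. Your strategy of "one steal per white point" yields at most $(0.45+0.05)\area(R)-2\varepsilon<\nicefrac{1}{2}\area(R)$, so Black does not win, even though the hypothesis of the lemma is satisfied.

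The repair — and what the paper actually does — is to drop the "one black point per white point" constraint: fix the family ($\mathcal H^\vert$, say) containing the oversized half cell and let Black steal the $n$ \emph{largest} half cells of that family, which may include both the left and the right half cell of the same white point (in the example above, both $0.45$-halves of $w_1$). Since the $2n$ half cells of one family sum to $\area(R)$ and are not all equal, the $n$ largest sum to $\nicefrac{1}{2}\area(R)+\delta$ for some $\delta>0$ by a standard majorization argument, and choosing $\varepsilon<\nicefrac{\delta}{n}$ in \Cref{lem:steal} gives Black a score strictly above $\nicefrac{1}{2}\area(R)$. (Disjointness of the captured regions is automatic because distinct half cells of a single family are disjoint; the interference issue you flag is real but handled by the same continuity argument, since each $b_i$ sits infinitesimally close to its white generator and therefore removes only an $O(\varepsilon)$ sliver from any other $H_j$.)
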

\begin{proof}
	If not all half cells of $W$ have the same area, then
	there exists a half cell~$H$ with $\area(H)>\nicefrac{1}{2n}\cdot \area(R)$. 
	We assume w.l.o.g.~that $H$ is a half cell of $\mathcal H^\vert$; otherwise we consider $\mathcal H^-$.
	Denoting the $n$ largest half cells of $\mathcal H^\vert$ by $H_1,\dots,H_n$, it follows that there exists $\delta>0$ such that
	$\sum_{i=1}^n H_i= \nicefrac{1}{2}\cdot\area(R) +\delta.$
	
	By \Cref{lem:steal}, Black can place a point $b_i$ to capture the area of $H_i$ up to any $\varepsilon>0$. More precisely, by choosing $\varepsilon < \nicefrac{\delta}{n}$, \Cref{lem:steal} guarantees that there exists a placement of $n$ black points $b_1,\dots,b_n$ such that 
	\begin{align*}
	\sum_{i=1}^n\area(V^{W\cup B}(b_i)\cap H_i) 
	&\geq \sum_{i=1}^n{(\area(H_i)-\varepsilon)}
	= \frac{1}{2} \area(R)+\delta-n\varepsilon
	>  \frac{1}{2} \area(R).
	\end{align*}
	Consequently, Black has a winning strategy by placing these points.
\end{proof}

These insights enable us to prove the main result of this section.

\begin{restatable}{theorem}{black}\label{thm:black}
	Black has a winning strategy for a set $W$ of $n$ white points in a rectangle $R$ if and only if Black has a winning point.
\end{restatable}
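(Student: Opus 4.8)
The plan is to prove the two implications separately; the forward direction is a short averaging argument, while the reverse direction carries the geometric content.

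First I would handle the easy direction ($\Rightarrow$). If $B$ is a winning set, then $\sum_{b\in B}\area(V^{W\cup B}(b))>\tfrac12\area(R)$, and since this sum has $n$ terms, some $b^\star\in B$ satisfies $\area(V^{W\cup B}(b^\star))>\tfrac{1}{2n}\area(R)$. Deleting the remaining black points can only enlarge this cell, i.e.\ $V^{W\cup\{b^\star\}}(b^\star)\supseteq V^{W\cup B}(b^\star)$, so $\area(V^{W\cup\{b^\star\}}(b^\star))>\tfrac{1}{2n}\area(R)$ and $b^\star$ is a winning point.

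For the reverse direction ($\Leftarrow$), I would first dispose of the unbalanced case: if the half cells of $W$ do not all have area $\tfrac{1}{2n}\area(R)$, then \Cref{lem:halving} already yields a winning strategy for Black. So assume $W$ is balanced and let $b$ be a winning point, writing $\area(V^{W\cup\{b\}}(b))=\tfrac{1}{2n}\area(R)+\delta$ with $\delta>0$; after an infinitesimal perturbation that preserves $\delta>0$ I may assume every bisector $\B(b,w)$, $w\in W$, is non-degenerate. Black's strategy is to play $b$ together with $n-1$ stealing points. The crux is the following claim: \emph{for every $w\in W$, at least one half cell of $w$ is disjoint from $C_b:=V^{W\cup\{b\}}(b)$.} Indeed, the region $b$ takes from $w$ equals $V^W(w)\cap D(b,w)$, which lies on $b$'s side of $\B(w,b)$; by property~(i) of \Cref{obs:prop}, a non-degenerate vertical (horizontal) bisector meets only the half cell of $w$ facing $b$, leaving the opposite half cell entirely on $w$'s side, hence disjoint from $C_b$. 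Choosing one free half cell $F_j$ per white point $w_j$ yields $n$ pairwise disjoint half cells, each of area $\tfrac{1}{2n}\area(R)$ and disjoint from $C_b$.

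To conclude, I would invoke \Cref{lem:steal} (which by symmetry applies to horizontal half cells as well, aggregated as in the proof of \Cref{lem:halving}) to place $n-1$ black points capturing $F_1,\dots,F_{n-1}$ with total loss at most $(n-1)\varepsilon$. Taking $\varepsilon<\delta/(n-1)$ and using that $C_b$ and the $F_j$ are pairwise disjoint, Black's total area is at least $\big(\tfrac{1}{2n}\area(R)+\delta\big)+(n-1)\big(\tfrac{1}{2n}\area(R)-\varepsilon\big)=\tfrac12\area(R)+\delta-(n-1)\varepsilon>\tfrac12\area(R)$, a win. The hard part will be the free-half-cell claim together with its bookkeeping: I must confirm via property~(i) of \Cref{obs:prop} that $b$'s encroachment into each white cell stays within a single half cell, dispatch degenerate bisectors by perturbation, and check that the simultaneously placed stealing points neither shrink $C_b$ nor one another---handled by placing each stealing point arbitrarily close to its white point so that its captured region stays within an $\varepsilon$-neighbourhood of $F_j$.
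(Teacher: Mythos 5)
Your reverse direction is essentially the paper's own argument: dispose of an unbalanced $W$ via \Cref{lem:halving}, use \Cref{obs:prop}(\ref{itemA}) to find, for each white point, a half cell disjoint from the winning point's cell $C_b$, and then steal $n-1$ of these pairwise disjoint half cells with total loss below $\delta$ via \Cref{lem:steal}. That part is sound, and your explicit perturbation making each $\B(b,w)$ non-degenerate is a reasonable touch that the paper leaves implicit.

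The gap is the first line of your forward direction. You assert that a winning set $B$ satisfies $\sum_{b\in B}\area(V^{W\cup B}(b))>\tfrac12\area(R)$, but winning only means that Black's score exceeds White's score, and under the Manhattan metric a degenerate bisector between a black and a white point creates a neutral zone of positive area credited to neither player. Black could therefore win while scoring, say, $0.4\,\area(R)$ against White's $0.3\,\area(R)$, in which case the pigeonhole step yields no cell of area above $\tfrac{1}{2n}\area(R)$. This is exactly the case the paper isolates in a dedicated claim: from any winning set scoring at most half, Black perturbs each point involved in a black--white degeneracy off its diagonal towards the larger of the two neutral regions, recovering that region up to an arbitrarily small net loss (affordable because the win is strict, and summable over multiple degeneracies at one point), and thereby obtains a winning set with no neutral zones and hence with score exceeding $\tfrac12\area(R)$; only then does your averaging argument apply. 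Without this perturbation step the forward implication is unproven.
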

\begin{proof}
\new{Let $B$ be a winning set for Black.}
If \new{Black's winning score with $B$} exceeds
$\nicefrac{1}{2}\cdot \area(R)$ 
  then, by the pigeonhole principle, \new{there exists  $b\in B$ such that its cell $V^{W\cup B}(b)$ has an area exceeding}
  $\nicefrac{1}{2n}\cdot \area(R)$. \new{Since $\area(V^{W\cup\{b\}}(b))\geq \area(V^{W\cup B}(b))$}, $b$ is  a winning point.
  Otherwise Black \new{wins by playing $B$ but} scores at most half the area of $R$. 
\new{We show that there exists a winning set for Black which achieves a higher score, enabling us to use the argument presented above given any winning set.}

\begin{claim}
	Let $B$ be a winning set for Black such that Black scores at most $\nicefrac{1}{2n}\cdot \area(R)$. Then there is also a winning set $B'$ for Black such 
	that Black's score exceeds $\nicefrac{1}{2n}\cdot \area(R)$.
\end{claim}
\begin{claimproof}
	If Black wins with $B$, then Black's score exceeds White's score. Moreover, since Black scores at most $\nicefrac{1}{2n}\cdot \area(R)$, there exist neutral zones and degenerate bisectors between black and white points.
	
	Consider a white point $w$ and a black point $b$ with $ \Delta_x(w,b)= \Delta_y(w,b)$.
	Black can avoid this degeneracy by choosing a slightly perturbed location. By moving on either side of the diagonal line through $w$ and $b$, Black can win either of the neutral \new{regions' areas} up to any $\varepsilon>0$. If the neutral regions are of different sizes, then Black can ensure a net gain. If the areas are 
	the same, then Black has a net loss of $\varepsilon>0$. However, since Black wins, they can allow for some net loss $\varepsilon>0$. 
	This argument applies even if $b$ contributes to more than one degeneracy by consideration of the sum of the losses and gains in the resulting cells. Therefore $b$ can avoid degeneracy by an arbitrarily small net loss. The repeated application of these perturbations for all of Black's points shows that Black has a winning set without forcing neutral regions. Consequently, Black's score exceeds $\nicefrac{1}{2n}\cdot \area(R)$.
\end{claimproof}

Now suppose that there exists a winning point $b$, i.e.,  $\area(V^{W\cup\{b\}}(b))=\nicefrac{1}{2n}\cdot\area(R)+\delta$ for some $\delta>0$.  If $n=1$, Black clearly wins with $b$. If $n\geq 2$,  Black places $n-1$ further black points:
consider $w_i\in W$. By \Cref{lem:halving}, we may assume that each half cell of $W$ has area $\nicefrac{1}{2n}\cdot \area(R)$.   By \Cref{obs:prop}(\ref{itemA}), $w_i$ has a half cell $H_i$ that is disjoint from $V^{W\cup\{b\}}(b)$.
	 By \Cref{lem:steal}, Black can place a point $b_i$ 
	 to capture the area of $H_i$ up to every $\varepsilon>0$. Choosing $\varepsilon<\nicefrac{\delta}{n-1}$ and placing one black point for \new{every} $n-1$ distinct white point\new{,} with \Cref{lem:steal}, 
	 Black achieves a score of
	 $\sum_{p\in B}\area(V^{W\cup B}(p)) =\left(\nicefrac{1}{2n}\cdot \area(R)+\delta\right)+(n-1)\left(\nicefrac{1}{2n}\cdot \area(R)-\varepsilon\right)
	 >\nicefrac{1}{2}\cdot \area(R).$
	Consequently, Black has a winning strategy. 
\end{proof}
\section{Properties of Unbeatable Sets}\label{sec:white}

In this section, we identify necessary properties of \emph{unbeatable} white sets, for which the game ends in a tie or White wins. 
We call a cell a \emph{\bridge} if it has two opposite boundary arms.

\begin{theorem}\label{thm:properties}
	If $W$ is an unbeatable white point set in a rectangle $R$, then it fulfils the following properties:
	\begin{description}
	\item[(P1)\label{item:P1}] The area of every half cell of $W$ is $\nicefrac{1}{2n}\cdot\area(R)$.
	\item[(P2)\label{item:P2}] The arms of a non-\bridge cell are equally long; the opposite boundary arms of a \bridge cell are of equal length and, if $|W|>1$, they are shortest among all arms.
\end{description}
\end{theorem}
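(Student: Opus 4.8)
\emph{Proof proposal.}
Property~(\hyperref[item:P1]{P1}) follows immediately from \Cref{lem:halving}. Since $W$ is unbeatable, Black has no winning strategy, so by the contrapositive of \Cref{lem:halving} no half cell of $W$ may have area different from $\nicefrac{1}{2n}\cdot\area(R)$; hence every half cell has exactly this area. In particular, by \Cref{th:balanced} the set $W$ is \balanced, and by \Cref{obs:quadrant} every cell satisfies the quarter-cell property, i.e., diagonally opposite quarter cells have equal area. I would carry this structure into the proof of~(\hyperref[item:P2]{P2}) as a standing assumption.

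For~(\hyperref[item:P2]{P2}) the plan is to argue by contraposition: assuming~(\hyperref[item:P1]{P1}), I show that whenever the arm lengths of some cell violate the stated conditions, Black can place a single \emph{winning point}, which by \Cref{thm:black} yields a winning strategy and contradicts unbeatability. The basic tool is a strengthening of the stealing move of \Cref{lem:steal}. Merely capturing one half cell never suffices, since by~(\hyperref[item:P1]{P1}) a half cell has area exactly $\nicefrac{1}{2n}\cdot\area(R)$ and the captured amount is always strictly smaller; so instead Black places a point $b$ whose cell $V^{W\cup\{b\}}(b)$ straddles the bisector between two white points $w_1,w_2$ and eats into both cells. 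Placing $b$ on the line through two facing arms, the two bisectors $\B(b,w_1)$ and $\B(b,w_2)$ bound a strip whose width equals half the distance between $w_1$ and $w_2$, i.e., half the combined length of the two facing arms; crucially this width does not depend on where exactly $b$ sits between the two points. The height of the captured region is governed by the neighbours above and below. Comparing this area with $\nicefrac{1}{2n}\cdot\area(R)$ converts the arm-length question into an area inequality: equal arms make every such capture at best a tie, whereas a length imbalance yields strictly more.

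Carrying this out requires distinguishing the two cases. For a non-\bridge cell, every opposite pair of arms contains an inner arm, whose tip lies on a bisector with a neighbouring white point; I would apply the horizontal and vertical variants of the strip-capture to force each opposite pair of arms to be equal, and then compare the two directions (or use a placement exploiting the quarter-cell areas forced by balancedness) to conclude that all four arms coincide. For a \bridge cell the two opposite boundary arms have lengths summing to a fixed side of $R$ (height $1$ for a vertical \bridge, width $\rho$ for a horizontal one); I would show by separate placements that these boundary arms must be equal and that neither inner arm can be shorter than them, the idea being to place $b$ so that its cell exploits the full extent of $R$ available along the \bridge, turning any asymmetry or any overly-short inner arm into a cell of area exceeding $\nicefrac{1}{2n}\cdot\area(R)$. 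The case $|W|=1$ is the stated exception: the single cell is all of $R$, all four arms are boundary arms, and there is no inner arm to compare against.

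The main obstacle is the geometric bookkeeping in these placements. A general Manhattan bisector consists of vertical, diagonal, and horizontal parts (see \Cref{fig:bisector}), and a neighbouring white point need not be axis-aligned with $w$, so the region Black actually captures is bounded by several bisectors at once and need not be a clean rectangle. I must verify that the captured set is genuinely Black's Voronoi cell against \emph{all} of $W$, not only against the two reference points, and bound its area from below by the arm-length expression. Keeping track of the interaction with the quarter-cell areas forced by balancedness, and handling the boundary-versus-inner status of each arm consistently across directions, is where the real work lies.
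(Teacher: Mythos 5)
Your treatment of \ref{item:P1} is correct and identical to the paper's: it is exactly the contrapositive of \Cref{lem:halving}. The reduction of \ref{item:P2} to exhibiting a single black \emph{winning point} via \Cref{thm:black} also matches the paper. The gap is in the placement you propose to realize that winning point. You dismiss moves based on a single white point because ``merely capturing one half cell never suffices,'' and switch to a point $b$ straddling the bisector of two white points $w_1,w_2$, capturing a strip of width $\nicefrac{1}{2}\,d_M(w_1,w_2)$. But this strip's dimensions are controlled by the two \emph{facing} arms of $w_1$ and $w_2$ and by the neighbours above and below the pair; they carry no information about the quantity you actually need to penalize, namely an imbalance between \emph{perpendicular} arms of one cell (e.g.\ a bottom arm shorter than a right arm), which is precisely the configuration that \ref{item:P2} forbids for non-\bridge cells. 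You give no mechanism converting that local imbalance into excess strip area, and your guiding dichotomy --- ``equal arms make every such capture at best a tie, whereas a length imbalance yields strictly more'' --- is false in its first half: in a \balanced square $2\times 2$ grid all arms are equal, yet the paper's \Cref{fig:2by2} shows a straddling placement capturing $2d^2+\nicefrac{d^2}{4}>\nicefrac{1}{2n}\cdot\area(R)$. So the strip capture neither characterizes arm equality nor is it shown to detect its failure.

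The idea you are missing is the paper's diagonal offset: Black places $b$ at $(\delta,\delta)$ relative to a \emph{single} white point $w=(0,0)$ and perturbs it so that $\B(b,w)$ is vertical. A shift lemma (\Cref{clm:DeltaShift}) shows that in the first quadrant of $b$ every bisector with the other white points moves up or right by exactly $\delta$, so compared with the right half cell $H$ of $w$ (area exactly $\nicefrac{1}{2n}\cdot\area(R)$ by \ref{item:P1}), the cell of $b$ gains at least $\delta(|A_2|-\delta)$ along the right arm while losing at most $\delta|A_1|+\nicefrac{1}{2}\delta^2$ along the short bottom arm; for $\delta$ small this nets a strict gain whenever $|A_2|>|A_1|$ and the opposite arm is inner (\Cref{clm:armsAll}, and analogously \Cref{clm:armsBridge} for opposite arms). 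This is a single-white-point move that captures strictly \emph{more} than one half cell --- exactly the possibility you ruled out. Without the shift lemma, your ``geometric bookkeeping'' obstacle is not a technicality but the entire missing argument, so the proof of \ref{item:P2} as proposed does not go through.
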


\begin{proof}
	Because $W$ is unbeatable, property \ref{item:P1} follows immediately from \Cref{lem:halving}. Moreover, in case $|W|=1$, \ref{item:P1} implies that opposite arms of the unique (bridge) cell have equal length, i.e., \ref{item:P2} holds for $|W|=1$.
	
	It remains to prove property \ref{item:P2} for $|W|\geq 2$. By \Cref{thm:black}, it suffices to identify a black winning point if \ref{item:P2} is violated. We start with the following fact.

\begin{restatable}{claim}{shift}\label{clm:DeltaShift}
	Let $P$ be a point set containing $p=(0,0)$ and let $P'$ be obtained from $P$ by adding $p'=(\delta,\delta)$ where $\delta>0$ such that $p'$ lies within $V^P(p)$.
	Restricted to $Q:=Q_1(p')$, the cell $V^{P'}(p')$ contains all points that are obtained when the boundary of $V^{P}(p)\cap Q$ is shifted upwards (rightwards) by $\delta$ (if it does not intersect the boundary of $R$). 
\end{restatable}

\begin{claimproof}
	To prove this claim, it suffices to consider the individual bisectors of $p'$ and any other point $q\in P'$. Note that all points shaping the cell of $p'$ in quadrant $Q$ are contained in an octant $O_i(p')$ with $i\in\{1,2,3,8\}$.
	We show that vertical bisectors move rightwards and horizontal bisectors move upwards.
	
	For a point $q=(x,y)$ in $\overline{O}_2(p')$, the part of the bisector $\B(q,p')\cap Q$ can be obtained from $\B(q,p)\cap Q$ by shifting it upwards by an amount of $\delta$; see also \Cref{fig:O2}. In particular, the initial height of the diagonal segment remains unchanged because its vertical distance to $q$ is $\nicefrac{1}{2}(\Delta_y(q,p')-\Delta_x(q,p'))=\nicefrac{1}{2}((y-\delta)-(x-\delta))=\nicefrac{1}{2}(y-x)=\nicefrac{1}{2}(\Delta_y(q,p)-\Delta_x(q,p))$. Note also that this holds for degenerate bisectors, because only their diagonal segment is contained in $Q$.
	
	\begin{figure}[htb]
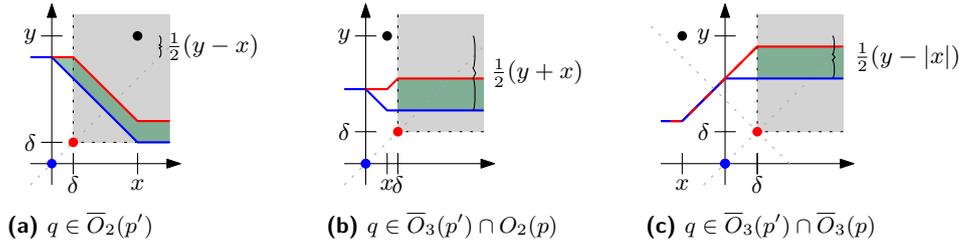

		\centering
		\begin{subfigure}[t]{.25\textwidth}
			\centering
			\includegraphics[page=8]{Cell}
			\caption{$q\in \overline{O}_2(p')$}
			\label{fig:O2}
		\end{subfigure}\hfil
		\begin{subfigure}[t]{.25\textwidth}
			\centering
			\includegraphics[page=10]{Cell}
			\caption{$q\in \overline{O}_3(p')\cap O_2(p)$}
			\label{fig:O3a}
		\end{subfigure}\hfil
		\begin{subfigure}[t]{.3\textwidth}
			\centering
			\includegraphics[page=9]{Cell}
			\caption{$q\in \overline{O}_3(p')\cap \overline{O}_3(p)$}
			\label{fig:O3b}
		\end{subfigure}
		\caption{Illustration of \cref{clm:DeltaShift}. If $q=(x,y)$ lies in 
			$\overline O_2(p')\cup\overline O_3(p')$, the part of the bisector~$\B(q,p')$ within the first quadrant $Q$ of $p'$ coincides with $\B(q,p)\cap Q$ shifted upwards by~$\delta$.}
		\label{fig:shifts}
	\end{figure}

	For a point $q=(x,y)$ in $\overline{O}_3(p')\cap O_2(p)$, the vertical distance of $q$ to the horizontal segment of $\B(q,p)$ within $Q$ is $\nicefrac{1}{2}(y-x)+x=\nicefrac{1}{2}(y+x)$ while vertical distance of $q$ to the horizontal segment of $\B(q,p')$ is $\nicefrac{1}{2}((y-\delta)-(\delta-x))=\nicefrac{1}{2}(y+x)-\delta$; see also \Cref{fig:O3a}. 
	
	For a point $q=(x,y)$ in $\overline{O}_3(p')\cap \overline{O}_3(p)$, the vertical distance of $q$ to the horizontal segment of the bisector within $Q$ is $\nicefrac{1}{2}((y-\delta)-(|x|+\delta))=\nicefrac{1}{2}(y-|x|)-\delta$ for $p'$ and $\nicefrac{1}{2}(y-|x|)$ for $p$; see also \Cref{fig:O3b}. 
	
	Note that for $q\in{O}_2(p')\cup {O}_3(p')$, the bisector $\B(q,p)$ is horizontal. Consequently, all shifted segments are horizontal or diagonal. Shifting them rightwards yields a region contained in $V^{P'}(p')$.  By symmetry, all (vertical) bisectors of points within ${O}_1(p')\cup {O}_8(p')$ are shifted rightwards. This implies the claim.
\end{claimproof}

We use our insight of \Cref{clm:DeltaShift} to show property  \ref{item:P2} in two steps.

	\begin{claim}\label{clm:armsAll}
	Let $w\in W$ be a point such that an arm $A_1$ of $V^W(w)$ is shorter than a neighbouring arm~$A_2$ and the arm~$A_3$ opposite to $A_1$ is inner. Then Black has a winning point.
	\end{claim}
\begin{claimproof}
Without loss of generality, we consider the case that $A_1$ is the bottom arm of $V^W(w)$,  $A_2$ its right arm, and $w=(0,0)$; see \Cref{fig:TopLeftArm}. We denote the length of $A_i$ by $|A_i|$. 
Now we consider Black placing a point $b$ within $V^W(w)$ at $(\delta,\delta)$ for some $\delta>0$. 
To ensure that the cell of $b$ contains almost all of the right half cell of $V^W(w)$, we infinitesimally perturb $b$ rightwards; for ease of notation in the following analysis, we omit the corresponding infinitesimal terms and assume that the bisector of $b$ and $w$ is vertical.  We compare the area of $V(b):=V^{W\new{\cup\{b\}}}(b)$  with the right half cell $H$ of $w$. In particular, we show that there exists $\delta>0$ such that the area of $V(b)$ exceeds the area of~$H$. Because $\area(H)=\nicefrac{1}{2n}\cdot \area(R)$ by~\ref{item:P1}, $b$ is a winning point.

	\begin{figure}[htb]
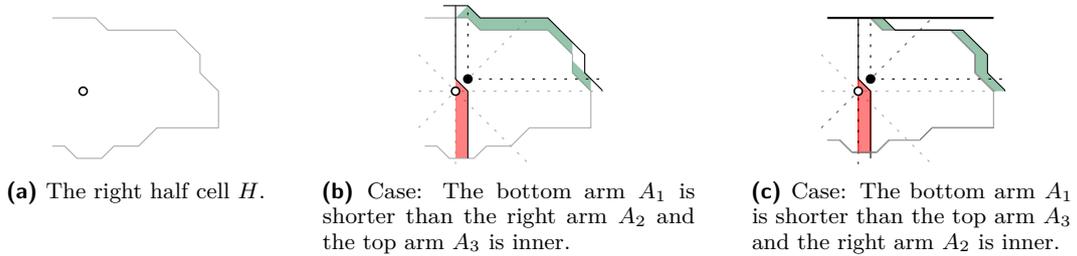

		\centering
		\begin{subfigure}[t]{.243\textwidth}
			\centering
			\includegraphics[page=3]{Cell}
			\caption{The right half cell $H$.}
			\label{fig:LeftCell}
		\end{subfigure}\hfill
		\begin{subfigure}[t]{.35\textwidth}
			\centering
			\includegraphics[page=4]{Cell}
			\caption{Case: The bottom arm $A_1$ is shorter than the right arm $A_2$ and the top arm $A_3$ is inner.}
			\label{fig:TopLeftArm}
		\end{subfigure}\hfill
		\begin{subfigure}[t]{.3\textwidth}
			\centering
			\includegraphics[page=6]{Cell}
			\caption{Case: The bottom arm $A_1$ is shorter than the top arm $A_3$ and the right arm $A_2$ is inner.}
			\label{fig:TopBottomArm}
		\end{subfigure}
		\caption{Illustration of \cref{clm:armsAll} and \cref{clm:armsBridge}: the gain and loss of $V(b)$ compared to $H$.}
		\label{fig:armClaims}
	\end{figure}

Clearly, all points in $H$ to the right of the (vertical) bisector of $b$ and $w$ are closer to $b$.
Consequently, when compared to $H$, the loss of $V(b)$ is upper bounded by $\delta|A_1|+\nicefrac{1}{2} \delta^2$; see also \Cref{fig:TopLeftArm}.
By  \Cref{clm:DeltaShift} and the fact that $A_3$ is inner, $V(b)\cap Q_1(p)$ gains at least $\delta(|A_2|-\delta)$ when compared to $H\cap Q_1(p)$. 
When additionally guaranteeing $\delta<\nicefrac{2}{3}(|A_2|-|A_1|)$, the gain exceeds the loss and thus $b$ is a winning point.
\end{claimproof}

For a cell with two neighbouring inner arms,  
\Cref{clm:armsAll} implies that all its arms have equal length. Consequently, it only remains to prove \ref{item:P2} for \bridge{s}.
With arguments similar to those proving \Cref{clm:armsAll}, we obtain the following result. For an illustration, see \Cref{fig:TopBottomArm}. 

\begin{restatable}{claim}{armsBridge}\label{clm:armsBridge}
	If there exists a point $w\in W$ such that two opposite arms of $V^W(w)$  have different lengths and a third arm is inner, then Black has a winning point.
\end{restatable}
\begin{claimproof}
	Without loss of generality, we consider the case that $A_1$ is the bottom arm, $A_1$ is shorter than the top arm $A_3$, and the right arm $A_2$ is inner. 
	Analogously to the proof of \Cref{clm:armsAll}, Black places a point $b$ at $(\delta,\delta)$ for some $\delta>0$ and chooses the vertical bisector with $w$. As above, when compared to the right half cell $H$ of $w$, the loss of $V^{W\new{\cup\{b\}}}(b)$  is bounded above by $\delta|A_1|+\nicefrac{1}{2} \, \delta^2$.
	By \Cref{clm:DeltaShift} and the fact that $A_2$ is inner, the gain is bounded below by $\delta(|A_3|-\delta)$.
	Guaranteeing $\delta<\nicefrac{2}{3}(|A_3|-|A_1|)$, the gain exceeds the loss. Thus, if $|A_3|>|A_1|$, Black has  a winning point.
\end{claimproof}

If $|W|>1$, every cell has at least one inner arm. Therefore \Cref{clm:armsBridge} yields that opposite boundary arms of a bridge cell have equal length. Moreover, \Cref{clm:armsAll} implies that the remaining arms are not shorter. This proves \ref{item:P2} for bridges.
\end{proof}

We now show that unbeatable white sets are grids; in some cases they are even \emph{square grids}, i.e., every cell is a square.

\begin{lemma}\label{lem:grids}
Let $P$ be a set of $n$ points in a $(1\times\rho)$ rectangle $R$ with $\rho\geq 1$ fulfilling properties \ref{item:P1} and \ref{item:P2}. Then $P$ is a grid. More precisely, if $\rho\geq n$, then $P$ is a $1\times n$ grid; otherwise, $P$ is a square grid.
\end{lemma}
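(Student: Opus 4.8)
The plan is to reduce everything to a single structural claim: that \emph{every Voronoi cell of $P$ is an axis-aligned rectangle}. Once this is established, the lemma follows quickly. Indeed, by \ref{item:P1} every cell has area $\frac{1}{n}\area(R)=\frac{\rho}{n}$ and every half cell has area $\frac{\rho}{2n}$. A rectangular cell realises its whole bounding box, so its four arm lengths are exactly its four half-widths; hence a rectangular \emph{non-bridge} cell has, by \ref{item:P2}, all four arms equal and is therefore a \emph{square}, while a rectangular \bridge spans a full side of $R$ and, by \ref{item:P2}, its two boundary arms are the shortest. A horizontal \bridge thus has boundary arms of length $\frac12$ and perpendicular arms of length at least $\frac12$, so its width is at least $1$ and its area $\frac{\rho}{n}\ge 1$, i.e. $\rho\ge n$. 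Consequently, if $\rho<n$ no cell can be a \bridge, so all cells are equal-area squares tiling $R$, which forces a square grid; and if $\rho\ge n$ the \bridge cells will be seen to span the height and produce the $1\times n$ grid.

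Before the two regimes, I would record two facts. First, for $n\ge 2$ there are no \emph{vertical} \bridge{s}: a vertical \bridge would have left and right boundary arms of length $\frac{\rho}{2}$ that are shortest by \ref{item:P2}, forcing the vertical arms to have length at least $\frac{\rho}{2}$ each; since they sum to at most $1$ this gives $\rho\le 1$ and a cell spanning both dimensions, i.e. $n=1$. Hence every \bridge is horizontal, with its point at height $\frac12$ and boundary arms of length $\frac12$. Second, an extremal observation to seed the structure: a topmost point $p$ has a top \emph{boundary} arm, because for $y>y_p$ every other point (lying weakly below $p$) is at Manhattan distance at least $y-y_p$ from $(x_p,y)$, with equality impossible; the analogous statement holds at the bottom, left, and right.

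For the case $\rho\ge n$ the rectangle claim is cheap. Every cell has area $\frac{\rho}{n}\ge 1$ and, by \Cref{obs:prop}\,(\ref{itemC}), lies in the box spanned by its arms, whose height is at most $1$; hence the horizontal arm length $l+r$ is at least $\frac{\rho}{n}\ge 1$. If such a cell were a non-\bridge, \ref{item:P2} would make all arms equal to some $L$ with $2L\le 1$ (vertical extent), so $l+r=2L\le 1$ and therefore $\frac{\rho}{n}=1$ and the cell would equal its bounding \emph{unit square}, which spans the full height and is in fact a horizontal \bridge, a contradiction. So every cell is a horizontal \bridge, every point lies at height $\frac12$, and equal $y$-coordinates make every pairwise bisector the purely vertical bisector of the $\Delta_y=0$ case. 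Thus all cells are vertical strips, i.e. rectangles, and \ref{item:P1} equalises their widths to $\frac{\rho}{n}$: the configuration is the $1\times n$ grid.

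The case $\rho<n$ is where the real work lies, and proving that every cell is a rectangle is the main obstacle. The difficulty is that, in contrast to the wide regime, \ref{item:P1} and \ref{item:P2} do \emph{not} force a single cell to be rectangular: an interior cell with four equal arms could a priori be a diamond (or another star-shaped region whose boundary uses diagonal bisector segments), and such non-rectangular shapes are exactly what produces the non-grid \balanced sets of \Cref{lem:existence}. Rectangularity must therefore be extracted \emph{globally}. The route I would pursue is to propagate structure inward from the boundary: using the extremal boundary-arm fact together with \ref{item:P2}, show that the points organise into horizontal rows lying on equally spaced lines $y=\frac{2i-1}{2a}$; argue that within a row all points share a height so that, by the $\Delta_y=0$ case, their mutual bisectors are vertical and the row reduces to a one-dimensional problem with equal column widths; and finally use the horizontal bisectors between consecutive rows — which the equal-arm condition forces to be straight segments rather than staircases — to align the columns across rows and equalise the side length, yielding congruent squares and hence a square grid. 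The single most delicate point, which I expect to carry the proof, is showing that neighbouring cells never share a \emph{diagonal} bisector (equivalently, that adjacent points always agree in one coordinate); to make this rigorous I would set up an induction that slices $R$ along a full cell-boundary line into two proportional subrectangles, each inheriting \ref{item:P1} and \ref{item:P2}, and invoke the inductive hypothesis.
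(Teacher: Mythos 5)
Your treatment of the case $\rho\geq n$ is correct and essentially the paper's argument in a slightly different order: you force every cell to be a \bridge{} spanning the height of $R$, place every point on the centre line, and let the resulting vertical bisectors and \ref{item:P1} produce the $1\times n$ grid. (The paper instead bounds the left/right arm lengths from below via the half-cell areas and then invokes \ref{item:P2}; the two routes are interchangeable.) The reduction ``every cell is an axis-aligned rectangle $\Rightarrow$ the lemma'' is also sound.

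The case $\rho<n$, however, is where the lemma actually has content, and there you have only a plan, not a proof. You correctly identify the crux --- ruling out diagonal bisector segments between neighbouring cells --- but you do not resolve it; ``I would set up an induction that slices $R$ along a full cell-boundary line into two proportional subrectangles, each inheriting \ref{item:P1} and \ref{item:P2}'' presupposes exactly what is to be shown: that such a straight full-length cut exists at all, and that the points split proportionally to the areas of the two pieces so that the normalisation $\nicefrac{1}{2n'}\cdot\area(R')$ survives. Neither is available before the grid structure is established. The paper closes this gap with a concrete local-to-global propagation that your sketch is missing: the cell containing the top-left corner cannot be a \bridge{} (its left half cell would have area at least $\nicefrac{1}{2}>\nicefrac{\rho}{2n}$), so by \ref{item:P2} it has four arms of equal length $d$; its corner quarter cell $C_2$ is a genuine $d\times d$ square, and the diagonally opposite quarter cell $C_4$, having equal area by \ref{item:P1} and \Cref{obs:quadrant} while being contained in a $d\times d$ box, is forced to be a square as well. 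This rigidity pins the adjacent generator $q$ to the unique position at distance $2d$ along the boundary (its left arm has length $d$, whence \ref{item:P2} forces its top arm to have length $d$ too), and iterating this step along the boundary and then into the interior yields the square grid without ever needing a global slicing argument. Until you supply a mechanism of comparable force for excluding diagonal adjacencies, the $\rho<n$ half of the lemma remains unproven.
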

\begin{proof}
We distinguish two cases.

Case 1: $\rho\geq n$.
By \ref{item:P1},  every half cell has area 
$\nicefrac{1}{2n}\cdot \area(R)=\nicefrac{1}{2n}\cdot \rho \geq \nicefrac{1}{2}$. Since the height of every half cell is bounded by $1$, every left and right arm has a length of at least~$\nicefrac{1}{2}$. 
Then, property \ref{item:P2} implies that each top and bottom arm has length $\nicefrac{1}{2}$, i.e., every $p\in P$ is placed on the horizontal centre line of $R$.
Finally, again by \ref{item:P1}, the points must be evenly spread.
Hence,   $P$ is a $1\times n$ grid.

Case 2: $\rho< n$. 
We consider the point $p$ whose cell $V^P(p)$ contains the top left corner of~$R$ and denote its quarter cells by $C_i$. Then, $C_2$ is a rectangle.
Moreover,  $V^P(p)$ is not a \bridge; otherwise  
its left half cell has area \new{at least} $\nicefrac{1}{2}> \nicefrac{1}{2n}\cdot \rho=\nicefrac{1}{2n}\cdot \area(R)$. Therefore, by  \ref{item:P2}, all arms of $V^P(p)$ have the same length; we denote this length by $d$. Together with the fact that $C_2$ and $C_4$ have the same area by \ref{item:P1} and \Cref{obs:quadrant}, it follows that $C_2$ and $C_4$  are squares of side length~$d$.

We consider the right boundary of $C_4$. 
Since the right arm of $V^P(p)$ has length~$d$ (and the boundary continues vertically below), some point $q$ has distance~$2d$ to $p$ and lies in $Q_1(p)$. The set of all these  possible point locations forms a segment, which is highlighted in red in \Cref{fig:gridA}. 
Consequently, the left arm of $q$ has length  $d$. 
By \ref{item:P2}, the top arm of $q$ must also have length $d$. Hence, $q$ lies at the grid location illustrated in \Cref{fig:gridB}. Moreover, it follows that $q$ is the unique point whose cell shares part of the boundary with $C_1$; otherwise the top arm of $q$ does not have length $d$.

\begin{figure}[htbp]
	\centering
	\begin{subfigure}[t]{.25\textwidth}
		\centering
		\includegraphics[page=10]{EqualArea}
		\caption{Points $q \in Q_1(p) \cup Q_3(p)$ with $l_1(p,q)=2d$.}
		\label{fig:gridA}
	\end{subfigure}\hfil
	\begin{subfigure}[t]{.25\textwidth}
		\centering
		\includegraphics[page=11]{EqualArea}
		\caption{Unique neighbouring cell generators.}
		\label{fig:gridB}
	\end{subfigure}\hfil
	\begin{subfigure}[t]{.3\textwidth}
		\centering
		\includegraphics[page=9]{EqualArea}
		\caption{Final square grid.}
		\label{fig:gridC}
	\end{subfigure}
	\caption{Illustration of the proof of \Cref{lem:grids}.}
	\label{fig:grid2}
\end{figure}

By symmetry, a point $q'$ lies at a distance $2d$ below $p$ and distance $d$ to the boundary. 
Thus, every quarter cell of $V^P(p)$ is a square with edge length $d$; hence, the arms of all cells have length at least $d$.  Moreover, the top left quarter
cells of $V^P(q)$ and $V^P(q')$ are squares, so their bottom right
quadrants must also be squares. Using this argument iteratively along the
boundary implies that boundary cells are squares. Applying it to the
remaining rectangular hole shows that $P$ is a square grid.
\end{proof}

We now come to our main result. 
\begin{theorem}\label{thm:winningStrategy}
White has a winning strategy for placing $n$ points in a $(1\times \rho)$ rectangle with $\rho\geq 1$ if and only if $\rho\geq n$; otherwise Black has a winning strategy.
Moreover, if $\rho\geq n$, the unique winning strategy for White is to place a $1\times n$ grid.
\end{theorem}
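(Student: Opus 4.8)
The plan is to assemble the theorem from the structural results already in hand, pivoting on \Cref{lem:grids} and on whether a cell is a \bridge. Suppose first that a white set $W$ is unbeatable. Then by \Cref{thm:black} Black has no winning point, so \Cref{thm:properties} shows $W$ satisfies \ref{item:P1} and \ref{item:P2}, and \Cref{lem:grids} forces $W$ to be a $1\times n$ grid if $\rho\geq n$ and a square grid if $\rho<n$. Conversely, any $W$ violating \ref{item:P1} or \ref{item:P2} admits a black winning point and hence, by \Cref{thm:black}, a winning set for Black, so it is strictly beaten. This already yields the uniqueness clause: for $\rho\geq n$ the only configuration that can avoid a loss for White is the $1\times n$ grid.

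It remains to analyse the two grid families, and the deciding feature is whether a cell can expand into a \emph{vertical} neighbour. For $\rho\geq n$ I would show the $1\times n$ grid is not merely unbeatable but winning. Each cell is a $\nicefrac{\rho}{n}\times 1$ rectangle of width at least $1$ whose top and bottom arms are boundary arms; hence the cell of a single black point cannot grow vertically — it is clipped by $\partial R$, precisely the case excluded in \Cref{clm:DeltaShift} — while its horizontal competitors lie at distance at least $1$. A short estimate of one black cell then gives $\area(V^{W\cup\{b\}}(b))\leq\nicefrac{1}{2n}\cdot\area(R)$, with equality only for points lying on one of the $n-1$ bisectors separating consecutive cells. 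As there are fewer such bisectors than black points, at least one black point falls strictly short, so Black scores strictly below $\nicefrac{1}{2}\cdot\area(R)$ and White wins.

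For $\rho<n$ the $1\times n$ grid fails \ref{item:P2} (its boundary arms are longest, not shortest), so the only surviving candidate is a square grid, and here lies the real work: it satisfies \ref{item:P1} and \ref{item:P2}, and the one-cell perturbations of \Cref{clm:armsAll} and \Cref{clm:armsBridge} only break even on it, so a genuine winning point must be produced. The crucial difference from the $1\times n$ case is that $\rho<n$ forces at least two rows, so every cell of side $d\leq\nicefrac{1}{2}$ has a vertical neighbour. I would place a black point $b$ \emph{off-centre} in a cell, near the edge it shares with that neighbour; because the cell is short, the bisectors of $b$ with the generators of the adjacent row let the cell of $b$ bulge across the shared edge while still spanning essentially the full cell width, and a direct area computation (carried out once for a unit cell and rescaled by $d$) gives $\area(V^{W\cup\{b\}}(b))>\nicefrac{1}{2n}\cdot\area(R)$. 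Thus $b$ is a winning point and Black wins by \Cref{thm:black}. If no square grid fits the prescribed $\rho$ at all, then no white set meets \ref{item:P1} and \ref{item:P2}, and Black beats every $W$ by the skeleton above.

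The main obstacle is this pair of area estimates: the strict gain of the off-centre point on \emph{every} square grid, and the matching upper bound, with its equality analysis, on the $1\times n$ grid. Both rest on the same principle — a lone black cell can exceed a half-cell exactly when it can expand into a vertical neighbour — so the cleanest route is to prove a single quantitative lemma on the area a black point captures inside a cell, as a function of the cell's arm lengths and of whether its vertical arms reach $\partial R$, and then read off both directions from it.
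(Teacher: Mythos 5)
Your proposal follows essentially the same route as the paper: reduce via \Cref{thm:black}, \Cref{thm:properties} and \Cref{lem:grids} to the two grid families, then beat the square grid with an off-centre black point that bulges into a vertical neighbour (the paper does this explicitly in a $2\times 2$ subgrid, placing $b$ at distance $\nicefrac{3d}{2}$ from the corner for area $2d^2+\nicefrac{d^2}{4}$) and bound every black cell in the $1\times n$ grid by $\nicefrac{1}{2n}\cdot\area(R)$ with a pigeonhole on the $n-1$ optimal positions. The two area estimates you defer are exactly the computations the paper supplies, so the plan is sound; only note that in the $1\times n$ case equality is attained along the centre line within a gap (not only on the $n-1$ vertical bisectors), so the final counting step must also rule out two black points sharing a gap.
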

\begin{proof}
First we show that Black has a winning strategy if $\rho< n$. Suppose that Black cannot win. Note that $\rho< n$ implies $n\geq 2$.  Consequently, by \Cref{thm:properties} and \Cref{lem:grids}, the white point set $W$ is a square $a\times b$ grid with $a,b\geq 2$, and thus the four cells in the top left corner induce a $2\times 2$ grid. By \Cref{thm:black}, it suffices to identify a winning point for Black. Thus, we show the following:
\begin{claim}
	Black has a winning point in a square $2\times 2$ grid.
\end{claim}
\begin{claimproof}
	Suppose the arms of all cells have length~$d$. Then a black point $p$ is a winning point if its cell has an area exceeding $2d^2$. With $p$ placed at a distance $\nicefrac{3d}{2}$ from the top and left boundary as depicted in \Cref{fig:2by2}, the cell of $p$ has an area of $2d^2+\nicefrac{d^2}{4}$.
\end{claimproof}

\begin{figure}[htbp]
	\centering
	\begin{subfigure}[t]{.2\textwidth}
		\centering
		\includegraphics[page=12,scale=.97]{EqualArea}
		\caption{A black winning point in a $2\times 2$ grid.}
		\label{fig:2by2}
	\end{subfigure}
	\hfill
	\begin{subfigure}[t]{.78\textwidth}
		\centering
		\includegraphics[page=7,scale=.97]{EqualArea}
		\caption{Every black cell has an area $\leq \nicefrac{1}{2n}\cdot \area(R)$. Moreover,  only $n-1$ locations result in cells of that size. }
		\label{fig:gridBlack}
	\end{subfigure}
	\caption{Illustration of the proof of \cref{thm:winningStrategy}.
	}
\end{figure}

Secondly, we consider the case $\rho\geq n$ and  show that White has a winning strategy.  \Cref{thm:properties} and \Cref{lem:grids} imply that White must place its points in a $1\times n$ grid; otherwise Black can win. We show that Black has no option to beat this placement; i.e., if $\rho\geq n$, then:
\begin{claim}
	Black has no winning point and cannot force a tie in a $1\times n$ grid.
\end{claim}
\begin{claimproof}
	By symmetry, there essentially exist two different placements of a black point~$b$ with respect to a closest white point $w_b$. Without loss of generality, we assume that $w_b$ is to the left and not below $b$. Let $x$ and $y$ denote the horizontal and vertical distance of $b$ to $w_b$, respectively. For a unified presentation, we add half of potential neutral zones in case $x=y$ to the area of the black cell. As a consequence, Black loses if its cells have an area of less than $\nicefrac{1}{2}\cdot\area(R)$.
	
	If $x> y$, the cell of $b$ evaluates to an area of (at most) $\nicefrac{1}{2n}\cdot\area(R)-y^2$. In particular, it is maximized for $y=0$, i.e., when $b$ is placed on the horizontal centre line of $R$ and if there exist white points to the left and right of $b$. In this case the cell area is exactly $\nicefrac{1}{2n}\cdot\area(R)$.
	
	If $x\leq y$, the cell area of $b$ has an area of (at most)
	$\nicefrac{1}{2n}\cdot\area(R)-y(w'-h')-\nicefrac{1}{4}(3y^2+x^2)$,
	where $w':=\nicefrac{w}{2n}$ and $h':=\nicefrac{h}{2}$ denote the dimensions of the grid cells. Note that $w'\geq h'$ because $\rho\geq n$. 
	Consequently, the cell area is maximized for $x=0,y=0$. However, this placement coincides with the location of a white point and is thus forbidden. Therefore every valid placement results in a cell area strictly smaller than $\nicefrac{1}{2n}\cdot\area(R)$. 
	Consequently, Black has no winning point.
	
	Note that the cell area is indeed strictly smaller than the abovementioned maximum values if the black point does not have white points on both sides. Therefore the (unique) best placement of a black point is on the centre line between two white points, as illustrated by the rightmost black point in \Cref{fig:gridBlack}. However, there exist only $n-1$ distinct positions of this type; all other placements result in strictly smaller cells. Consequently, Black cannot force a tie and so loses.
	\end{claimproof}
	This completes the proof of the theorem.
\end{proof}

 \section{Open Problems}\label{sec:conclusion}

There are various directions for future work.

We demonstrated that there is a spectrum of balanced configurations,
based on identifying a number of small \emph{atomic} (i.e., non-decomposable) 
configurations that can be concatenated in a strip-like fashion. 
Are there further atomic configurations? Is it possible to combine them
into more intricate two-dimensional patterns rather than just putting together
identical strip-based configurations? Beyond that, the biggest challenge
is clearly to provide a full characterization of balanced configurations,
with further generalizations to other metrics and dimensions.

As our main result, we presented a full characterization of the {One-Round Voronoi Game} with Manhattan distances.
Just as for the previously studied Euclidean metric, this still leaves the multi-round variant as a wide open (and, most probably, quite difficult) problem. Further interesting problems arise from considering higher-dimensional variants.

\bibliography{abbrv,VoronoiBib}

\end{document}